\newtheorem{theorem}{Theorem}[section]
\newtheorem{corollary}[theorem]{Corollary}
\newtheorem{lemma}[theorem]{Lemma}
\theoremstyle{definition}
\theoremstyle{remark}
\numberwithin{equation}{section}
\newcommand{\bol}[1]{\mbox{\boldmath$#1$}}
\newcommand{\bSigma}{\mathbf{\Sigma}}
\newcommand{\bmu}{\bol{\mu}}
\newcommand{\btheta}{\bol{\theta}}
\newcommand{\bb}{\mathbf{b}}
\newcommand{\bx}{\mathbf{x}}
\newcommand{\by}{\mathbf{y}}
\newcommand{\bC}{\mathbf{C}}
\newcommand{\bX}{\mathbf{X}}
\newcommand{\bw}{\mathbf{w}}
\newcommand{\hbw}{\mathbf{\hat{w}}}
\newcommand{\bOne}{\mathbf{1}}
\newcommand{\bI}{\mathbf{I}}
\newcommand{\bxi}{\boldsymbol{\xi}}
\newcommand{\bD}{\mathbf{D}}
\newcommand{\bV}{\mathbf{V}}
\newcommand{\bS}{\mathbf{S}}
\providecommand{\keywords}[1]
{
\small	
\textbf{\textit{Keywords}:} #1
}
\title{Is the empirical out-of-sample variance an informative risk measure for the high-dimensional portfolios?}
\author[1]{Taras Bodnar}
\author[2]{Nestor Parolya}
\author[1]{Erik Thorsén}
\affil[1]{Department of Mathematics, Stockholm University, Roslagsv\"{a}gen 101, SE-10691 Stockholm, Sweden}
\affil[2]{Department of Applied Mathematics, Delft University of Technology, Mekelweg 4,
2628 CD Delft, The Netherlands}
\date{\today}
\begin{document}


\maketitle

\begin{abstract}
The main contribution of this paper is the derivation of the asymptotic behaviour of the out-of-sample variance, the out-of-sample relative loss, and of their empirical counterparts in the high-dimensional setting, i.e., when both ratios $p/n$ and $p/m$ tend to some positive constants as $m\to\infty$ and $n\to\infty$, where $p$ is the portfolio dimension, while $n$ and $m$ are the sample sizes from the in-sample and out-of-sample periods, respectively. The results are obtained for the traditional estimator of the global minimum variance (GMV) portfolio, for the two shrinkage estimators introduced by \cite{frahm2010} and \cite{bodnar2018estimation}, and for the equally-weighted portfolio, which is used as a target portfolio in the specification of the two considered shrinkage estimators. We show that the behaviour of the empirical out-of-sample variance may be misleading is many practical situations. On the other hand, this will never happen with the empirical out-of-sample relative loss, which seems to provide a natural normalization of the out-of-sample variance in the high-dimensional setup. As a result, an important question arises if this risk measure can safely be used in practice for portfolios constructed from a large asset universe.
\end{abstract}

\keywords{Shrinkage estimator; high-dimensional covariance matrix; random matrix theory; minimum variance portfolio; parameter uncertainty}

\newpage
\section{Introduction}\label{sec:intro}
Mean-variance analysis of Markowitz is a well established tool for optimal portfolio selection which is one of the most popular approaches today in financial literature (see, e.g., \cite{markowitz1952}, \cite{markowitz1959portfolio}, \cite{britten1999sampling}, \cite{ao2019approaching}, \cite{bodnar2020tests},  \cite{ding2021high}). The idea behind the approach is to invest in the portfolio which has the smallest variance for a given level of the expected return. In the limiting case of the fully risk-averse investor, the so-called global minimum variance (GMV) portfolio is selected. The latter portfolio possesses the smallest variance among all mean-variance optimal portfolios and lies on the vertex of the efficient frontier which is a parabola in the mean-variance space (see, \cite{merton1972}, \cite{kan2008distribution}, \cite{Bodnar2009Econometricalanalysisofthesampleefficientfrontier}).

One of the important challenges, which arise when the Markowitz theory is implemented in practice, is related to the estimation error which appears when unknown parameters of the data-generating process are replaced by their sample counterparts in the expressions of the optimal portfolio weights (see, \cite{okhrin2006distributional}, \cite{el2010high}, \cite{cai2020high}, \cite{bodnarokhrinparolya2020}, \cite{bodnar2021recent}). The impact of the parameter uncertainty on the performance of optimal portfolios is usually comparable to or even larger than the one described by the model uncertainty which is determined by using the covariance matrix in Markowitz optimization problem. Moreover, the estimation error present in an estimator of the mean vector has even a larger influence on the performance of optimal portfolios than the error related to the estimation of the covariance matrix (see, e.g., \cite{merton1980estimating}, \cite{best1991sensitivity}, \cite{chopra1993effect}). This is usually used in financial literature as an argument to hold the GMV portfolio whose weights only depends on the covariance matrix(\cite{chan1999portfolio}, \cite{jagannathan2003risk}, \cite{frahm2010}, \cite{bodnar2021sampling}).

Let $\by$ denote the $k$-dimensional vector of the asset returns and let $\bmu=\mathbb{E}(\by)$ and $\bSigma=\mathbb{V}ar(\by)$ be its mean vector and covariance matrix. Then the expected return and the variance of the portfolio with the weights $\bw$ are given by
\begin{equation*}
R_p=\bw^\top \bmu
\quad \text{and} \quad
V_p=\bw^\top\bSigma\bw,
\end{equation*}
respectively. The weights of the GMV portfolio are found by minimizing $V_p$ given that the whole investor wealth is invested in the selected assets, i.e., under the constraint $\bw^\top \bOne=1$ where $\bOne$ denotes the $p$-dimensional vector of ones. They are given by
\begin{equation}\label{w_gmv}
\bw_{GMV}=\frac{\bSigma^{-1}\bOne}{\bOne^\top\bSigma^{-1}\bOne},
\end{equation}
while the variance of the GMV portfolio is expressed as
\begin{equation}\label{V_gmv}
V_{GMV}=\bw_{GMV}^\top \bSigma \bw_{GMV}=\frac{1}{\bOne^\top\bSigma^{-1}\bOne}.
\end{equation}
We refer to $\bw_{GMV}$ and $V_{GMV}$ as the population weights and the population variance of the GMV portfolio, since they both depend on the unknown parameter $\bSigma$ of the data-generating model. It has to be noted that $V_{GMV}$ is also called the in-sample variance in financial literature (see, \cite{frahm2010}).

In practical applications, the population GMV portfolio cannot be constructed since its weights $\bw_{GMV}$ depend on the unobservable quantity $\bSigma$. Given historical realizations of the asset returns, $\by_1,...,\by_n$, the population covariance matrix is estimated by its sample counterpart expressed as
\begin{equation}\label{bS}
\bS_n=\frac{1}{n-1} \sum_{i=1}^n (\by_i-\bar{\by}_n)(\by_i-\bar{\by}_n)^\top
\quad \text{with} \quad
\bar{\by}_n=\frac{1}{n} \sum_{i=1}^n \by_i.
\end{equation}
Then, the traditional GMV portfolio is determined as the sample estimator of $\bw_{GMV}$ where the unknown $\bSigma$ is replaced by $\bS_n$, i.e.,
\begin{equation}\label{w_S}
\hbw_{n;S}=\frac{\bS_n^{-1}\bOne}{\bOne^\top\bS_n^{-1}\bOne}.
\end{equation}
If the portfolio dimension $p$ is considerably smaller than the sample size $n$, then $\bS_n$ consistently estimates $\bSigma$ under weak conditions imposed on the data-generating model of the asset returns and, consequently, the traditional GMV portfolio provides a good approximation of the population GMV portfolio.

The situation is completely different in the high-dimensional setting when the portfolio dimension is comparable to the sample size such that $p/n \to c \in [0,1)$ as $n \to \infty$ where the constant $c$ is called the concentration ratio (see, \cite{bai2010spectral}, \cite{bodnardetteparolya2019}). In this case the sample covariance matrix $\bS_n$ is not longer a consistent estimator for $\bSigma$. As a result, the traditional GMV portfolio might deviate considerably from the population GMV portfolio. In order to ensure a good performance of the holding portfolio, the weights of the traditional GMV portfolio have to be adjusted by taking the parameter uncertainty into account (see, e.g., \cite{jagannathan2003risk}, \cite{BodnarDmytriv2019}, \cite{ao2019approaching}, \cite{cai2020high}, \cite{ding2021high}).

In order to define an improved estimator of the high-dimensional GMV portfolio, i.e., when $p$ is comparable to $n$, the optimization problem has to be formulated. As a performance measure, the out-of-sample variance is usually used which is given by
\begin{equation}\label{oosVar}
V_{\hat{\bw}_n}= \hat{\bw}_n^\top\bSigma \hat{\bw}_n,   
\end{equation}
where $\hat{\bw}_n$ is an estimator of $\bw_{GMV}$ based on the asset returns $\by_1,...,\by_n$. Alternatively, one can use the out-of-sample relative loss
\begin{equation}\label{oosLoss}
L_{\hat{\bw}_n}=\frac{V_{\hat{\bw}_n}-V_{GMV}}{V_{GMV}}= \bOne^\top\bSigma^{-1}\bOne\hat{\bw}_n^\top \bSigma \hat{\bw}_n-1,   
\end{equation}
as a performance measure. By definitions of $V_{\hat{\bw}_n}$ and $L_{\hat{\bw}_n}$, one directly gets that the portfolio which minimizes the out-of-sample variance also minimizes the out-of-sample relative loss and vice versa.

Unfortunately, due to the presence of $\bSigma$ in \eqref{oosVar} and in \eqref{oosLoss}, both the performance measures can only be used in theoretical derivations or in the comparison study based on the simulated data where the covariance matrix $\bSigma$ is known. In practice, $\bSigma$ is usually replaced by its estimator $\bS_{n+1,m}$ constructed by using the asset returns $\by_{n+1},...,\by_{n+m}$ from time $n+1$ to $n+m$ and defined by
\begin{equation}\label{bS_m}
\bS_{n+1:n+m}=\frac{1}{m-1} \sum_{i=n+1}^{n+m} (\by_i-\bar{\by}_{n+1:n+m})(\by_i-\bar{\by}_{n+1:n+m})^\top
\quad \text{with} \quad
\bar{\by}_{n+1:n+m}=\frac{1}{m} \sum_{i=n+1}^{n+m} \by_i.
\end{equation}
Consequently, the out-of-sample variance and the out-of-sample relative loss are replaced by the sample counterparts, the so-called empirical out-of-sample variance and the empirical out-of-sample relative loss expressed as 
\begin{equation}\label{emp_oosVar}
\hat{V}_{\hat{\bw}_n;m}= \hat{\bw}_n^\top \bS_{n+1:m+1}\hat{\bw}_n,   
\end{equation}
and
\begin{equation}\label{emp_oosLoss}
\hat{L}_{\hat{\bw}_n;m}=\frac{\hat{V}_{\hat{\bw}_n}-(1-\tilde{c})^{-1}\hat{V}_{n+1:n+m;GMV}}{(1-\tilde{c})^{-1}\hat{V}_{n+1:n+m;GMV}}=(1-\tilde{c}) \bOne^\top\bS_{n+1:m+1}^{-1}\bOne\hat{\bw}_n^\top \bS_{n+1:m+1} \hat{\bw}_n-1,  
\end{equation}
respectively, with $p/m \to \tilde{c}$ as $m \to \infty$. In \eqref{emp_oosLoss},  $(1-\tilde{c})\hat{V}_{n+1:n+m;GMV}$ is a consistent estimator for $V_{GMV}$ in the high-dimensional setting (see, Lemma 1.3 in \cite{bodnarokhrinparolya2020}).

We contribute in this paper by deriving the asymptotic behaviour of the out-of-sample variance, of the out-of-sample relative loss, and of their empirical counterparts in the high-dimensional setting, i.e., when $p/n \to c$ as $n \to \infty$ and $p/m \to \tilde{c}$ as $m \to \infty$. The results are obtained for the sample estimator \eqref{w_S} of the GMV portfolio \eqref{w_gmv}, for two shrinkage estimators introduced by \cite{frahm2010} and \cite{bodnar2018estimation}, and for the equally-weighted portfolio, which is used as a target portfolio in the specification of the considered two shrinkage estimators. We show that the empirical out-of-sample variance might tend to zero independently of chosen estimator of the GMV portfolio, which make hard to distinguish between the estimators in practice. In contrast, the empirical out-of-sample losses of the considered estimators of the GMV portfolio tend to deterministic finite quantities. As such, a decision about the ranking of the estimators can be drawn. Moreover, one needs milder conditions for the derivation of the asymptotic properties of the empirical out-of-sample relative loss in comparison to the empirical out-of-sample variance, which is an additional advantage for the application of the former in practice.

Statistical methods used in the derivation of improved estimators of optimal portfolio weights and of the performance measures are closely related to the approaches applied in statistical signal processing. In particular, the GMV portfolio is linked to the Capon or minimum variance spatial filter in signal processing literature (see, e.g., \citet{verdu1998}, \citet{vantrees2002}). \citet{rubio2012performance}, \citet{yang2018high}, \citet{LiStoicaWang2004} studied the estimation risk in the case of the high-dimensional minimum variance beamformer, while \citet{MestreLaugunas2006} investigate the finite-sample size effect on minimum variance filter. \citet{Palomar2013} discuss the improved estimation of the inverse covariance matrix from signal processing perspectives. Finally, applications of random matrix theory to signal processing and portfolio optimization are provided in \citet{PalomarBook2016}, among others.

The rest of the paper is structured as follows. In Section \ref{sec:oosv}, the asymptotic behaviour of the out-of-sample variance and of the out-of-sample relative loss is established for the traditional sample estimator and for the two shrinkage approaches. Section \ref{sec:emp_oosv} presents the corresponding results in the case of the empirical performance measures. The results of a comprehensive simulation study are provided in Section \ref{sec:sim}, while the theoretical findings are implemented to real data in Section \ref{sec:emp}. Concluding remarks are drawn in Section \ref{sec:sum}. The proofs of the theoretical results are postponed to the appendix (Section \ref{sec:app}).

\section{Out-of-sample variance and relative loss}\label{sec:oosv}

Let the vector of asset returns, $\by_{1},...,\by_{n}, \by_{n+1},...,\by_{n+m}$ be independent and identically distributed with the following stochastic representation
\begin{equation}\label{model_yi}
\by_t=\bmu+\bSigma^{1/2}\bx_t,    
\end{equation}
where the components of $\bx_t$ are independent and identically distributed with zero mean, unit variance, and finite $4+\epsilon$ moments for some $\epsilon>0$. No specific distributional assumptions are imposed on the components of $\bx_t$. The symbol $\bSigma^{1/2}$ denotes the square root of a positive definite matrix $\bSigma$, i.e., $\bSigma=\bSigma^{1/2}(\bSigma^{1/2})^\top$. Finally, we note that only $\by_t$, $t=1,...,n+m$, are observable, while $\bmu$, $\bSigma$, and $\bx_t$, $t=1,...,n+m$, are all unknown. 

Depending on the performance measure different assumptions on the covariance matrix $\bSigma$ and on the weights $\bb$ of the target portfolio are imposed. They are summarized as follows:

\begin{description}
\item[\textbf{(A1)}] The variance of the GMV portfolio $V_{GMV}$ as given in \eqref{V_gmv} is uniformly bounded in $p$.

\item[\textbf{(A2)}] The variance of the target portfolio $V_{\bb}=\bb^\top \bSigma \bb$
 is uniformly bounded in $p$.
 
\item[\textbf{(A3)}] The relative loss of the target portfolio 
\begin{equation*}
L_{\bb}=\frac{V_{\bb}-V_{GMV}}{V_{GMV}}= \bOne^\top\bSigma^{-1}\bOne\bb^\top \bSigma \bb-1,   
\end{equation*}
is uniformly bounded in $p$.
\end{description}

The considered assumptions are very general and are fulfilled in many applications. For instance, all three assumptions are fulfilled when the eigenvalues of $\bSigma$ are uniformly bounded in $p$ and the Euclidean norm of the target vector $\bb$ is uniformly bounded in $p$. Assumptions \textbf{(A1)} and \textbf{(A2)} will be needed when the out-of-sample variance \eqref{oosVar} and its empirical counterpart \eqref{emp_oosVar} are analyzed, while Assumption \textbf{(A3)} is required only in the case of the out-of-sample relative loos \eqref{oosLoss} and of the empirical out-of-sample relative loss \eqref{emp_oosLoss}. This is not surprising, since the relative loss functions are already normalized and for that reason less restrictive assumptions are needed to study their asymptotic behaviour. Furthermore, the normalization constant does not depend on an estimator of the GMV portfolio weights and thus, the normalization has no impact on the selected estimator.

Two shrinkage estimators for the GMV portfolio weights were derived in \cite{frahm2010} and \cite{bodnar2018estimation}, and they are given by
\begin{equation}\label{w_FM}
\hbw_{n;FM}=  \hat{\alpha}_{n;FM} \hat{\mathbf{w}}_{n;S} + (1- \hat{\alpha}_{n;FM})\bb
\end{equation}
with
\begin{equation}\label{alpha_GMV-FM}
\hat{\alpha}_{n;FM} = 1- \frac{p-3}{n-p+2}\left( \bOne^\top\bS_n^{-1}\bOne \bb^\top\bS_n\bb-1\right)^{-1}.
\end{equation}
and
\begin{equation}\label{w_BPS}
\hbw_{n;BPS}=\hat{\alpha}_{n;BPS} \hat{\mathbf{w}}_{n;S} + (1- \hat{\alpha}_{n;BPS})\mathbf{b}
\end{equation}
with
\begin{equation}\label{alpha_GMV-BPS}
\hat{\alpha}_{n;BPS} = \frac{\left(1-p/n\right)\left(\left(1-p/n\right) \bOne^\top\bS_n^{-1}\bOne \bb^\top\bS_n\bb-1\right)}{p/n+\left(1-p/n\right)\left(\left(1-p/n\right) \bOne^\top\bS_n^{-1}\bOne \bb^\top\bS_n\bb-1\right)},
\end{equation}
respectively.

Next, we present the asymptotic behaviour of the out-of-sample variance (Theorem \ref{th:th1}) and of the out-of-sample relative loss (Theorem \ref{th:th2}) calculated for the sample estimator $\hbw_{n;S}$ of the GMV portfolio weights and for two shrinkage estimators $\hbw_{n;FM}$ and $\hbw_{n;BPS}$ in the high-dimensional setting. The proofs of the theorems are given in the appendix. To this end, we note that the out-of-sample variance and the out-of-sample loss of the target portfolio $\bb$ are, by definition, expressed as
   \begin{equation}\label{oosV-b}
    V_{\bb}= \bb^\top \bSigma \bb
    \end{equation}
and
   \begin{equation}\label{oosL-b}
    L_{\bb}= \frac{V_{\bb}}{V_{GMV}}-1 = \bOne^\top\bSigma^{-1}\bOne \bb^\top \bSigma \bb-1,
    \end{equation}
respectively.

\begin{theorem}\label{th:th1}
Let $\by_{t}$, $t=1,...,n$ follow model \eqref{model_yi}. Then,
\begin{enumerate}[(i)]
    \item under Assumption \textbf{(A1)}, for the out-of-sample variance of the sample GMV portfolio $\hbw_{n;S}$ it holds that 
    \begin{equation}\label{th1-oosV-wS}
    \left|V_{\hbw_{n;S}}-(1-c)^{-1} V_{GMV}\right|  \stackrel{a.s.}{\rightarrow} 0,
    \end{equation}
    \item under Assumptions \textbf{(A1)} and \textbf{(A2)}, for the out-of-sample variance of the shrinkage GMV portfolio $\hbw_{n;BPS}$ it holds that
    \begin{equation}\label{th1-oosV-wBPS}
    \left|V_{\hbw_{n;BPS}}-\left(V_{GMV}+    \alpha_{BPS}^2\frac{c}{1-c}V_{GMV}+(1-\alpha_{BPS})^2(V_{\bb}-V_{GMV})\right)\right|    \stackrel{a.s.}{\rightarrow} 0
    \end{equation}
with
\begin{equation}\label{th1-oosV-alpBPS}
    \alpha_{BPS}=\frac{(1-c)L_{\bb}}{c+(1-c)L_{\bb}},
\end{equation}
\item under Assumptions \textbf{(A1)} and \textbf{(A2)}, for the out-of-sample variance of the shrinkage GMV portfolio $\hbw_{n;FM}$ it holds that
    \begin{equation}\label{th1-oosV-wFM}
    \left|V_{\hbw_{n;FM}}-\left(V_{GMV}+    \alpha_{FM}^2\frac{c}{1-c}V_{GMV}+(1-\alpha_{FM})^2(V_{\bb}-V_{GMV})\right)\right|    \stackrel{a.s.}{\rightarrow} 0
    \end{equation}
with
\begin{equation}\label{th1-oosV-alpFM}
      \alpha_{FM}=1-\frac{c}{1-c}((1-c)^{-1}(L_{\bb}+1)-1)^{-1}
      =\frac{L_{\bb}}{L_{\bb}+c},
\end{equation}
\end{enumerate}
for $p/n \rightarrow c \in (0, 1)$ as $n\rightarrow \infty$.
\end{theorem}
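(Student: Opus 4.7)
}

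The backbone of the argument is a set of almost-sure deterministic equivalents for quadratic and bilinear forms in $\bS_n^{-1}$ that are by now classical in the random-matrix literature (see, e.g., \citet{rubio2012performance}, \citet{bodnarokhrinparolya2020}). Under model \eqref{model_yi} with $4+\epsilon$ moments, for any deterministic sequences $\bu,\bv$ with $\bu^\top\bSigma^{-1}\bu$ and $\bv^\top\bSigma\bv$ uniformly bounded in $p$, as $p/n\to c\in(0,1)$ one has
\begin{align*}
\bu^\top \bS_n^{-1}\bv-\tfrac{1}{1-c}\,\bu^\top\bSigma^{-1}\bv &\stackrel{a.s.}{\rightarrow}0,\\
\bu^\top \bS_n^{-1}\bSigma\bS_n^{-1}\bu-\tfrac{1}{(1-c)^3}\,\bu^\top\bSigma^{-1}\bu &\stackrel{a.s.}{\rightarrow}0,\\
\bu^\top \bS_n\bv-\bu^\top\bSigma\bv &\stackrel{a.s.}{\rightarrow}0.
\end{align*}
Under (A1), the choice $\bu=\bOne$ is admissible since $\bOne^\top\bSigma^{-1}\bOne=V_{GMV}^{-1}$ is bounded; under (A2), $\bu=\bb$ is admissible as well.

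For (i), I would rewrite
\[
V_{\hbw_{n;S}}=\frac{\bOne^\top\bS_n^{-1}\bSigma\bS_n^{-1}\bOne}{\bigl(\bOne^\top\bS_n^{-1}\bOne\bigr)^2},
\]
apply the first and second equivalents above with $\bu=\bv=\bOne$, and use Slutsky/continuous mapping (the denominator converges to $(1-c)^{-1}V_{GMV}^{-1}>0$) to obtain $V_{\hbw_{n;S}}\stackrel{a.s.}{\rightarrow}(1-c)^{-1}V_{GMV}$.

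For (ii) and (iii), I would expand
\[
V_{\hbw_{n;\star}}=\hat{\alpha}_{n;\star}^{\,2}\,V_{\hbw_{n;S}}+2\hat{\alpha}_{n;\star}(1-\hat{\alpha}_{n;\star})\,\hbw_{n;S}^\top\bSigma\bb+(1-\hat{\alpha}_{n;\star})^{2}V_{\bb},\qquad \star\in\{BPS,FM\},
\]
and treat each piece separately. The cross term reduces, using $\bOne^\top\bb=1$, to
\[
\hbw_{n;S}^\top\bSigma\bb=\frac{\bOne^\top\bS_n^{-1}(\bSigma\bb)}{\bOne^\top\bS_n^{-1}\bOne}\stackrel{a.s.}{\rightarrow}\frac{(1-c)^{-1}\bOne^\top\bb}{(1-c)^{-1}\bOne^\top\bSigma^{-1}\bOne}=V_{GMV}.
\]
The shrinkage intensities are handled by plugging the first and third equivalents into \eqref{alpha_GMV-BPS} and \eqref{alpha_GMV-FM}: both formulas depend on the data only through $(1-p/n)\,\bOne^\top\bS_n^{-1}\bOne\,\bb^\top\bS_n\bb$, which converges almost surely to $V_{\bb}/V_{GMV}=L_{\bb}+1$. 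Routine algebra then gives $\hat{\alpha}_{n;BPS}\to\alpha_{BPS}$ and $\hat{\alpha}_{n;FM}\to\alpha_{FM}$. Finally, combining the three limits and using the identity $2\alpha(1-\alpha)+\alpha^{2}=1-(1-\alpha)^{2}$ collects the three limiting terms into the stated expression
\[
V_{GMV}+\alpha_{\star}^{2}\tfrac{c}{1-c}V_{GMV}+(1-\alpha_{\star})^{2}(V_{\bb}-V_{GMV}).
\]

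I expect the main technical hurdle to be the rigorous verification of the bilinear-form equivalent for $\bOne^\top\bS_n^{-1}\bSigma\bb$, because $\bSigma\bb$ is not a bounded-norm vector in general and the usual statements in the literature are phrased for vectors with controlled $\bSigma$- and $\bSigma^{-1}$-quadratic forms. The right quantities to bound are $(\bSigma\bb)^\top\bSigma^{-1}(\bSigma\bb)=V_{\bb}$ and $\bOne^\top\bSigma^{-1}\bOne=V_{GMV}^{-1}$, both of which are uniformly bounded under (A1)--(A2); a Cauchy--Schwarz-type inequality in the operator framework then justifies the bilinear equivalent. Everything else — the limits for $\bb^\top\bS_n\bb$, $\bOne^\top\bS_n^{-1}\bOne$ and the polynomial simplifications of $\alpha_{BPS}$, $\alpha_{FM}$ — is straightforward plug-and-chug.
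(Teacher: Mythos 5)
Your proposal follows essentially the same route as the paper's own proof: the same ratio representation and deterministic equivalents for part (i), and the same three-term expansion with the cross term $\hbw_{n;S}^\top\bSigma\bb\stackrel{a.s.}{\rightarrow}V_{GMV}$ (via $\bOne^\top\bb=1$) for parts (ii)--(iii). The only cosmetic difference is that you derive the limits of $\hat{\alpha}_{n;BPS}$ and $\hat{\alpha}_{n;FM}$ directly from the equivalents for $\bOne^\top\bS_n^{-1}\bOne$ and $\bb^\top\bS_n\bb$, whereas the paper cites Theorem 2.1 of \cite{bodnar2018estimation} for the former; both give the same limiting intensities and the same final expression.
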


\begin{theorem}\label{th:th2}
Let $\by_{t}$, $t=1,...,n$ follow model \eqref{model_yi}. Then,
\begin{enumerate}[(i)]
    \item for the out-of-sample relative loss of the sample GMV portfolio $\hbw_{n;S}$ it holds that 
    \begin{equation}\label{th2-oosL-wS}
    \left|L_{\hbw_{n;S}}-\frac{c}{1-c}\right|    \stackrel{a.s.}{\rightarrow} 0,
    \end{equation}
\item under Assumption \textbf{(A3)}, for the out-of-sample relative loss of the shrinkage GMV portfolio $\hbw_{n;BPS}$ it holds that
    \begin{equation}\label{th2-oosL-wBPS}
    \left|L_{\hbw_{n;BPS}}-\left(\alpha_{BPS}^2\frac{c}{1-c}+(1-\alpha_{BPS})^2L_{\bb}\right)\right|    \stackrel{a.s.}{\rightarrow} 0,
    \end{equation}
\item under Assumption \textbf{(A3)}, for the out-of-sample relative loss of the shrinkage GMV portfolio $\hbw_{n;FM}$ it holds that
    \begin{equation}\label{th2-oosL-wFM}
    \left|L_{\hbw_{n;FM}}-\left(\alpha_{FM}^2\frac{c}{1-c}+(1-\alpha_{FM})^2 L_{\bb}\right)\right|    \stackrel{a.s.}{\rightarrow} 0,
    \end{equation}
    \end{enumerate}
for $p/n \rightarrow c \in (0, 1)$ as $n\rightarrow \infty$ where $\alpha_{BPS}$ and $\alpha_{FM}$ are given in \eqref{th1-oosV-alpBPS} and \eqref{th1-oosV-alpFM}, respectively.
\end{theorem}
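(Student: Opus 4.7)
The plan is to reduce Theorem \ref{th:th2} to Theorem \ref{th:th1} via a scale-invariance argument. The key observation is that if we replace the population covariance $\bSigma$ by $k\bSigma$ for some $k>0$ (which also rescales the sample counterpart to $k\bS_n$), then the following invariances hold. First, the sample GMV estimator $\hbw_{n;S}=\bS_n^{-1}\bOne/(\bOne^\top\bS_n^{-1}\bOne)$ is a ratio in $\bS_n$ and is therefore unchanged. Second, the shrinkage intensities $\hat\alpha_{n;BPS}$ and $\hat\alpha_{n;FM}$ from \eqref{alpha_GMV-BPS} and \eqref{alpha_GMV-FM} depend on $\bS_n$ only through the product $\bOne^\top\bS_n^{-1}\bOne\cdot\bb^\top\bS_n\bb$, which is scale-invariant (the two factors scale by $k^{-1}$ and $k$); hence $\hbw_{n;BPS}$ and $\hbw_{n;FM}$ are unchanged as well. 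Third, the relative loss $L_{\hbw}=\bOne^\top\bSigma^{-1}\bOne\cdot\hbw^\top\bSigma\hbw-1$ is scale-invariant because its two factors rescale reciprocally.

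I then choose $k=V_{GMV}^{-1}$, so that in the rescaled model with covariance $\tilde{\bSigma}=V_{GMV}^{-1}\bSigma$ one has $\tilde{V}_{GMV}=1$ (Assumption \textbf{(A1)} holds trivially) and $\tilde{V}_{\bb}=V_{\bb}/V_{GMV}=1+L_{\bb}$, which is uniformly bounded in $p$ under Assumption \textbf{(A3)} (so Assumption \textbf{(A2)} is also satisfied). The portfolio weights of interest are unchanged while out-of-sample variances transform as $\tilde{V}_{\hbw}=V_{\hbw}/V_{GMV}$. We may therefore apply Theorem \ref{th:th1} to the rescaled problem and translate the conclusion back.

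For part (i), Theorem \ref{th:th1}(i) in the rescaled problem gives $|\tilde{V}_{\hbw_{n;S}}-(1-c)^{-1}|\stackrel{a.s.}{\to}0$, which is the same as $|V_{\hbw_{n;S}}/V_{GMV}-(1-c)^{-1}|\stackrel{a.s.}{\to}0$; subtracting $1$ yields $L_{\hbw_{n;S}}\stackrel{a.s.}{\to}c/(1-c)$. For parts (ii) and (iii), Theorem \ref{th:th1}(ii), (iii) applied to the rescaled model, after substituting $\tilde{V}_{GMV}=1$ and $\tilde{V}_{\bb}-\tilde{V}_{GMV}=L_{\bb}$, produces
\begin{equation*}
\Bigl|\,V_{\hbw_{n;\bullet}}/V_{GMV}-\bigl(1+\alpha_{\bullet}^{2}\tfrac{c}{1-c}+(1-\alpha_{\bullet})^{2} L_{\bb}\bigr)\Bigr|\stackrel{a.s.}{\to}0,
\end{equation*}
for $\bullet\in\{BPS,FM\}$, and subtracting $1$ on each side delivers the claimed limits.

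The main obstacle in this plan is really bookkeeping: one has to carefully verify the scale-invariance of $\hat\alpha_{n;BPS}$ and $\hat\alpha_{n;FM}$ by direct inspection of their defining expressions, and confirm that the moment assumptions on $\bx_t$ and the high-dimensional regime $p/n\to c$ are automatically preserved when only $\bSigma$ is rescaled, since the innovations $\bx_t$ and the ratio $p/n$ are untouched by the rescaling. This makes the reduction to Theorem \ref{th:th1} rigorous and explains why only Assumption \textbf{(A3)} is needed: the relative loss absorbs the scaling of $V_{GMV}$.
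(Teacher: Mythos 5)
Your proposal is correct, and its core move is the same as the paper's: the appendix proof of Theorem \ref{th:th2} consists of the single sentence that the result ``follows from Theorem \ref{th:th1} and the definition of the relative loss.'' What you add, and what the paper leaves implicit, is the justification for why the reduction goes through under the \emph{weaker} hypotheses of Theorem \ref{th:th2} (no assumption at all in part (i), only \textbf{(A3)} in parts (ii)--(iii), versus \textbf{(A1)}--\textbf{(A2)} in Theorem \ref{th:th1}). A naive division of \eqref{th1-oosV-wS} by $V_{GMV}$ would require $V_{GMV}$ bounded away from zero and would still need \textbf{(A1)}--\textbf{(A2)} to invoke Theorem \ref{th:th1} in the first place; your normalization $\tilde{\bSigma}=V_{GMV}^{-1}\bSigma$, combined with the observation that $\hbw_{n;S}$, $\hat\alpha_{n;FM}$, $\hat\alpha_{n;BPS}$ and the relative loss are all invariant under $\bSigma\mapsto k\bSigma$ while the innovations $\bx_t$ and the ratio $p/n$ are untouched, makes $\tilde V_{GMV}=1$ and $\tilde V_{\bb}=1+L_{\bb}$, so \textbf{(A1)} holds trivially and \textbf{(A3)} delivers \textbf{(A2)} for the rescaled sequence. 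This is a clean and rigorous patch; the only cosmetic caveat is that the rescaling constant $V_{GMV}^{-1}$ depends on $p$, so you are applying Theorem \ref{th:th1} to a re-indexed sequence of models, which is legitimate since the theorem's hypotheses are uniform in $p$. In short: same route as the paper, but you supply the scale-invariance bookkeeping that the paper's one-line proof omits and that is actually needed to obtain the statement under Assumption \textbf{(A3)} alone.
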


The findings of Theorem \ref{th:th2} shows that the relative loss of shrinkage portfolios is present as a linear combination of the relative loss of the corresponding target portfolio and of the limiting relative loss of the traditional GMV portfolio. The relative loss of the traditional GMV portfolio $\hbw_{n;S}$ tends to a constant $c/(1-c)$ that does not depend on the covariance matrix of the asset returns. Moreover, if $c$ tends to $1$, then the relative loss of the traditional GMV portfolio tends to infinity showing that the impact of the estimation error could be drastically large in the high-dimensional setting. Furthermore, using \eqref{th1-oosV-alpBPS} and \eqref{th1-oosV-alpFM} the limiting values of relative loss computed for two shrinkage estimators can be rewritten as
\begin{equation}\label{asym_L_FM}
\alpha_{FM}^2\frac{c}{1-c}+(1-\alpha_{FM})^2 L_{\bb}= \frac{L_{\bb}^2}{(c+L_{\bb})^2}  \frac{c}{1-c}+\frac{c^2}{(c+L_{\bb})^2}L_{\bb} 
\end{equation}
for the shrinkage estimator of \cite{frahm2010} and
\begin{equation}\label{asym_L_BPS}
\alpha_{BPS}^2\frac{c}{1-c}+(1-\alpha_{BPS})^2 L_{\bb} =   
\frac{(1-c)L_{\bb}^2}{(c+(1-c)L_{\bb})^2} c +\frac{c^2}{(c+(1-c)L_{\bb})^2}L_{\bb} 
\end{equation}
for the shrinkage estimator of \cite{bodnar2018estimation}. As a result, expressions \eqref{asym_L_FM} and \eqref{asym_L_BPS} show that the out-of-sample relative loss of the shrinkage estimator \eqref{w_FM} tends to infinity as $c$ approaches one, similarly to the traditional estimator $\hbw_{n;S}$, while the out-of-sample relative loss of the shrinkage estimator \eqref{w_BPS} tends to the relative loss of the target portfolio when $c$ tends to one.

The results of Theorem \ref{th:th2} lead also to some dominance statements presented Corollary \ref{cor:cor1} in terms of the out-of-sample relative loss. Due to the relationship between the out-of-sample variance and the out-of-sample loss the same statements also hold for the out-of-sample variance by using the findings of Theorem \ref{th:th1}.

\begin{corollary}\label{cor:cor1}
Let $\by_{t}$, $t=1,...,n$ follow model \eqref{model_yi}. Then, under Assumption \textbf{(A3)} it holds that
\begin{enumerate}[(i)]
\item 
\[L_{\hbw_{n;S}}-L_{\hbw_{n;FM}}\stackrel{a.s.}{\rightarrow}\frac{c^2(c+L_{\bb}+cL_{\bb})}{(1-c)(c+L_{\bb})^2}\ge 0,
\text{ for $\frac{p}{n}\rightarrow c \in (0, 1)$ as $n\rightarrow \infty$,}\]
with equality if and only if $c=0$ or $L_{\bb}=\infty$, i.e., when the sample size is considerably larger than the portfolio dimension or the target portfolio deviates too strong from the true GMV portfolio;
\item \[ L_{\hbw_{n;S}}-L_{\hbw_{n;BPS}} \stackrel{a.s.}{\rightarrow}\frac{c^2}{(1-c)(c+(1-c)L_{\bb})} \ge 0
\text{ for $\frac{p}{n}\rightarrow c \in (0, 1)$ as $n\rightarrow \infty$,}\]
with equality if and only if $c=0$ or $L_{\bb}=\infty$, i.e., when the sample size is considerably larger than the portfolio dimension or the target portfolio deviates too strong from the true GMV portfolio; 
\item 
    \[L_{\hbw_{n;FM}} - L_{\hbw_{n;BPS}}\stackrel{a.s.}{\rightarrow}\frac{c^4 L_{\bb}^2}{(1-c)(c+L_{\bb})^2(c+(1-c)L_{\bb})} \ge 0
    \text{ for $\frac{p}{n}\rightarrow c \in (0, 1)$ as $n\rightarrow \infty$,}\]
    with equality if and only if  $c=0$ or $c>0$, $L_{\bb}=0$  or $c>0$, $L_{\bb}=\infty$, i.e., when the target portfolio coincides with the true GMV portfolio or the target portfolio deviates too strong from the true GMV portfolio when the concentration ratio is positive.
\end{enumerate}
\end{corollary}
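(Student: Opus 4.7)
The plan is to invoke Theorem \ref{th:th2} directly for each of the three estimators and reduce the corollary to purely algebraic manipulation of the three limits $c/(1-c)$, $\alpha_{FM}^2 c/(1-c)+(1-\alpha_{FM})^2 L_{\bb}$, and $\alpha_{BPS}^2 c/(1-c)+(1-\alpha_{BPS})^2 L_{\bb}$. Since almost sure convergence is preserved under subtraction, each difference $L_{\hbw_{n;X}}-L_{\hbw_{n;Y}}$ converges almost surely to the corresponding difference of the limit expressions, and the task is to simplify and check positivity. Assumption \textbf{(A3)} ensures $L_{\bb}\ge0$ is finite, so all algebraic manipulations are justified.

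For part (i), I would substitute $\alpha_{FM}=L_{\bb}/(L_{\bb}+c)$ and $1-\alpha_{FM}=c/(L_{\bb}+c)$ into \eqref{th2-oosL-wFM}, combine the two terms over the common denominator $(1-c)(L_{\bb}+c)^2$, and then subtract from $c/(1-c)$. After clearing denominators and cancelling an $L_{\bb}^2$ term, the numerator collapses to $c^2(c+L_{\bb}+cL_{\bb})$, giving the stated limit. For part (ii) the same approach with $\alpha_{BPS}=(1-c)L_{\bb}/(c+(1-c)L_{\bb})$ and $1-\alpha_{BPS}=c/(c+(1-c)L_{\bb})$ collapses \eqref{th2-oosL-wBPS} to the compact form $cL_{\bb}/(c+(1-c)L_{\bb})$, and subtracting from $c/(1-c)$ yields the stated $c^2/[(1-c)(c+(1-c)L_{\bb})]$ after one line of arithmetic.

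For part (iii), the most efficient route is to form the difference of the limits already obtained in (i) and (ii), which requires combining $c^2(c+L_{\bb}+cL_{\bb})/[(1-c)(L_{\bb}+c)^2]$ with $c^2/[(1-c)(c+(1-c)L_{\bb})]$ over the common denominator $(1-c)(L_{\bb}+c)^2(c+(1-c)L_{\bb})$. The key identity that makes the numerator telescope is
\[(c+(1-c)L_{\bb})(c+(1+c)L_{\bb})=(c+L_{\bb})^2-c^2L_{\bb}^2,\]
which follows from direct expansion (the cross terms in $L_{\bb}$ sum to $2cL_{\bb}$, and the $L_{\bb}^2$ coefficient is $(1-c)(1+c)=1-c^2$). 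This identity is the single nontrivial algebraic step and is the main potential obstacle; once in hand, the numerator reduces to $c^4L_{\bb}^2$, producing the stated expression.

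Finally, each limit expression is manifestly a ratio of non-negative quantities under \textbf{(A3)} and $c\in(0,1)$, so the inequalities $\ge 0$ are immediate. The equality conditions are read off by inspecting the numerators and the limiting behaviour as $L_{\bb}\to\infty$: in (i) and (ii), the factor $c^2$ forces equality at $c=0$, while for $c>0$ the ratio vanishes only in the limit $L_{\bb}\to\infty$ because the denominator grows one order faster in $L_{\bb}$ than the numerator; in (iii), the numerator $c^4L_{\bb}^2$ vanishes elementarily at $c=0$ or $L_{\bb}=0$, and the same asymptotic argument gives the additional case $L_{\bb}\to\infty$ for positive $c$. No probabilistic work beyond citing Theorem \ref{th:th2} is required — the entire corollary is an algebraic corollary of the asymptotic limits.
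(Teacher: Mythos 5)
Your proposal is correct and follows exactly the route the paper intends: the corollary is a purely algebraic consequence of the limits in Theorem \ref{th:th2} (the paper offers no separate proof beyond this), and your simplifications — the collapse of the BPS limit to $cL_{\bb}/(c+(1-c)L_{\bb})$, the identity $(c+(1-c)L_{\bb})(c+(1+c)L_{\bb})=(c+L_{\bb})^2-c^2L_{\bb}^2$, and the resulting numerators $c^2(c+L_{\bb}+cL_{\bb})$, $c^2$, and $c^4L_{\bb}^2$ — all check out, as do the sign and equality conditions.
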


The findings of Corollary \ref{cor:cor1} show that the shrinkage estimator of \cite{bodnar2018estimation} outperforms the other two estimators, while the shrinkage estimator of \cite{frahm2010} is always better than the sample estimator $\hbw_{n;S}$. The exception is present when the sample size $n$ is considerably larger than the portfolio dimension $p$ such that the concentration ratio is equal to zero or when the target portfolio is very poorly chosen such that its relative loss is infinity. In the latter situation, the investor might consider a different target portfolio in order to get the advantage of the shrinkage approaches over the sample estimator. Interestingly, when the target portfolio coincide with the population GMV portfolio, then both shrinkage estimators perform similarly.

\section{Empirical out-of-sample variance and relative loss}\label{sec:emp_oosv}
The results of Theorems \ref{th:th1} and \ref{th:th2} cannot be used in practice, since the definitions of both the out-of-sample variance and the out-of-sample relative loss depend on the unknown population covariance matrix $\bSigma$. As a result, different portfolio strategies are compared between each other based on the empirical counterparts of the out-of-sample performance measures as presented in \eqref{emp_oosVar} and \eqref{emp_oosLoss}, respectively, where the sample of the asset returns $\by_{n+1},...,\by_{n+m}$ is used to construct an estimator of the covariance matrix denoted by $\bS_{n+1:n+m}$ as in \eqref{bS_m}.

In Theorems \ref{th:th3} and \ref{th:th4} we derive the asymptotic properties of the empirical out-of-sample variance and of the empirical out-of-sample relative loss computed for the four portfolios discussed in Section \ref{sec:oosv}. The proofs of the theorems are presented in the appendix. It is remarkable that the results of Theorems \ref{th:th3} and \ref{th:th4} are deduced under the same conditions as given in the statements of Theorems \ref{th:th1} and \ref{th:th2}, even though additional randomness is taken into account in the derivations of the results. Moreover, both the empirical out-of-sample variances and the out-of-sample relative losses converge to the same limiting values as given in Theorems \ref{th:th1} and \ref{th:th2}.

\begin{theorem}\label{th:th3}
Let $\by_{t}$, $t=1,...,n+m$ follow model \eqref{model_yi}. Then,
\begin{enumerate}[(i)]
    \item under Assumption \textbf{(A1)}, for the empirical out-of-sample variance of the sample GMV portfolio $\hbw_{n;S}$ it holds that 
    \begin{equation}\label{th3-emp-oosV-wS}
    \left|\hat{V}_{\hbw_{n;S};m}-(1-c)^{-1} V_{GMV}\right|    \stackrel{a.s.}{\rightarrow} 0,
    \end{equation}
    \item under Assumption \textbf{(A2)}, for the empirical out-of-sample variance of the target portfolio $\bb$ it holds that 
   \begin{equation}\label{th3-emp-oosV-b}
    \left|\hat{V}_{\bb;m}-V_{\bb}\right|    \stackrel{a.s.}{\rightarrow} 0,
    \end{equation}
    \item under Assumptions \textbf{(A1)} and \textbf{(A2)}, for the empirical out-of-sample variance of the shrinkage GMV portfolio $\hbw_{n;BPS}$ it holds that
    \begin{equation}\label{th3-emp-oosV-wBPS}
    \left|\hat{V}_{\hbw_{n;BPS};m}-\left(V_{GMV}+    \alpha_{BPS}^2\frac{c}{1-c}V_{GMV}+(1-\alpha_{BPS})^2(V_{\bb}-V_{GMV})\right)\right|    \stackrel{a.s.}{\rightarrow} 0,
    \end{equation} 
    with $\alpha_{BPS}$ as in \eqref{th1-oosV-alpBPS},
    \item under Assumptions \textbf{(A1)} and \textbf{(A2)}, for the empirical out-of-sample variance of the shrinkage GMV portfolio $\hbw_{n;FM}$ it holds that
    \begin{equation}\label{th3-emp-oosV-wFM}
    \left|\hat{V}_{\hbw_{n;FM};m}-\left(V_{GMV}+    \alpha_{FM}^2\frac{c}{1-c}V_{GMV}+(1-\alpha_{FM})^2(V_{\bb}-V_{GMV})\right)\right|    \stackrel{a.s.}{\rightarrow} 0,
    \end{equation}
    with $\alpha_{FM}$ as in \eqref{th1-oosV-alpFM},
\end{enumerate}
for $p/n \rightarrow c \in (0, 1)$ and $p/m \rightarrow \tilde{c} \in (0, \infty)$ as $n\rightarrow \infty$.
\end{theorem}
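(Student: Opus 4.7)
The plan is to reduce each statement to Theorem \ref{th:th1} via the decomposition
\[
\hat V_{\hat{\bw}_n;m}=V_{\hat{\bw}_n}+R_{n,m},\qquad R_{n,m}=\hat{\bw}_n^\top(\bS_{n+1:n+m}-\bSigma)\hat{\bw}_n,
\]
so that the limit of the first summand is already delivered by parts (i)--(iii) of Theorem \ref{th:th1}. It therefore suffices to prove $R_{n,m}\stackrel{a.s.}{\to} 0$; part (ii) of the present theorem is the same statement with the deterministic target vector $\bb$ in place of $\hat{\bw}_n$, in which case the argument below simplifies because $\bb$ is non-random.

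The key structural observation I will exploit is that $\hat{\bw}_n$ is measurable with respect to $\mathcal F_n=\sigma(\by_1,\dots,\by_n)$, while $\bS_{n+1:n+m}$ depends only on the independent block $\by_{n+1},\dots,\by_{n+m}$. Conditional on $\mathcal F_n$, the scalars $Z_i=\hat{\bw}_n^\top(\by_{n+i}-\bmu)$, $i=1,\dots,m$, are i.i.d.\ with mean zero and variance $V_{\hat{\bw}_n}$, and
\[
R_{n,m}=\frac{1}{m-1}\sum_{i=1}^{m}Z_i^2-V_{\hat{\bw}_n}-\frac{m}{m-1}\bar Z^2.
\]
I would combine the $(4+\epsilon)$-moment hypothesis on $\bx_t$, which gives $\mathbb E[Z_i^{4}\mid\mathcal F_n]\le C\,V_{\hat{\bw}_n}^{2}$, with the a.s.\ boundedness of $V_{\hat{\bw}_n}$ from Theorem \ref{th:th1} under Assumptions \textbf{(A1)}--\textbf{(A2)}, to obtain a uniform conditional moment bound on $Z_i$. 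A conditional Marcinkiewicz--Zygmund (or Chebyshev--Borel--Cantelli) argument then forces both $\frac{1}{m-1}\sum Z_i^2-V_{\hat{\bw}_n}\to 0$ and $\bar Z^2\to 0$ a.s.\ as $m\to\infty$, yielding $R_{n,m}\to 0$ a.s. For the shrinkage portfolios, expanding $\hat{\bw}_n=\hat\alpha_n\hbw_{n;S}+(1-\hat\alpha_n)\bb$ produces additional bilinear terms of the form $\hbw_{n;S}^\top(\bS_{n+1:n+m}-\bSigma)\bb$, which I would treat identically as centered sample covariances of two bounded-moment scalar sequences; the random coefficients $\hat\alpha_n$ are a.s.\ bounded by their deterministic limits in \eqref{th1-oosV-alpBPS}--\eqref{th1-oosV-alpFM}.

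The hard part will be propagating this conditional concentration to the joint limit $n,m\to\infty$ with $p/n\to c$ and $p/m\to\tilde c$. Because $\hat{\bw}_n$ varies with $n$ and has random norm that can grow with $p$, a naive one-dimensional SLLN does not apply directly, and an operator-norm estimate $\|\bS_{n+1:n+m}-\bSigma\|\cdot\|\hat{\bw}_n\|^2$ would be too crude under $p/m\to\tilde c>0$. The needed uniformity is secured by first establishing a.s.\ boundedness of $V_{\hat{\bw}_n}$ along the joint sequence through Theorem \ref{th:th1}, and then using the $(4+\epsilon)$-moment assumption on $\bx_t$ to make the conditional fourth moment of $Z_i$ uniformly bounded, so that the Chebyshev tail is summable in $m$ and Borel--Cantelli applies conditionally; integrating out $\mathcal F_n$ yields the unconditional a.s.\ conclusion. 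Once $R_{n,m}\to 0$ a.s., combining with Theorem \ref{th:th1} immediately delivers \eqref{th3-emp-oosV-wS}, \eqref{th3-emp-oosV-wBPS} and \eqref{th3-emp-oosV-wFM}, and specializing to the deterministic $\bb$ gives \eqref{th3-emp-oosV-b}.
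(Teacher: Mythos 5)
Your proof is correct in outline and arrives at the result by a route that is the same in spirit as the paper's but different in its technical packaging. The paper does not decompose $\hat{V}_{\hat{\bw}_n;m}=V_{\hat{\bw}_n}+R_{n,m}$; instead it writes $\bS_{n+1:n+m}=\bSigma^{1/2}\bV_{n+1:n+m}\bSigma^{1/2}$, reduces everything to bilinear forms $\bxi^\top\bV_{1:n}^{-1}\bV_{n+1:n+m}\bV_{1:n}^{-1}\btheta$ and $\bxi^\top\bV_{1:n}^{-1}\bV_{n+1:n+m}\btheta$, handles the sample-mean centering by Sherman--Morrison, and invokes Lemma 4 of \citet{rubio2011spectral} for the concentration $\frac1m\sum_{j}\bx_j^\top A\bx_j-\tr(A)\to0$ with $A=\widetilde{\bV}_{1:n}^{-1}\btheta\bxi^\top\widetilde{\bV}_{1:n}^{-1}$ of bounded trace norm, together with Lemma 1.3 of \citet{bodnarokhrinparolya2020}. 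Your conditioning on $\mathcal{F}_n$ and analysis of the scalars $Z_i=\hat{\bw}_n^\top(\by_{n+i}-\bmu)=\bx_{n+i}^\top\bSigma^{1/2}\hat{\bw}_n$ is exactly the rank-one special case of that quadratic-form lemma, with $\tr(A)=V_{\hat{\bw}_n}$ playing the role of the trace-norm bound; so the two arguments rest on the same structural facts (independence of the two blocks, a.s.\ boundedness of $V_{\hat{\bw}_n}$ from Theorem \ref{th:th1}, and the $4+\epsilon$ moments). What your version buys is a more elementary, self-contained treatment that sidesteps the Sherman--Morrison bookkeeping and treats $\hat{\bw}_n$ as a black box; what the paper's version buys is that the hard concentration step is outsourced to a citable lemma and the bilinear forms needed for the cross terms and for Theorem \ref{th:th4} come out uniformly. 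One caveat you should make explicit: the plain Chebyshev route you mention parenthetically does not close the argument, since a conditional fourth-moment bound on $Z_i$ only yields tails of order $m^{-1}$ for $\frac1{m}\sum_i(Z_i^2-V_{\hat{\bw}_n})$, which are not summable; you genuinely need the full $(4+\epsilon)$ moments of the components of $\bx_t$ together with a Rosenthal or Marcinkiewicz--Zygmund inequality to get tails of order $m^{-1-\epsilon/4}$ and hence Borel--Cantelli along the joint sequence $n\asymp m\asymp p$ (this is precisely what the cited Rubio--Mestre lemma encapsulates). Also note that for part (ii) the paper simply cites Theorem 3.2 of \citet{bodnar2014strong}; your argument recovers it as the deterministic special case, which is fine.
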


\begin{theorem}\label{th:th4}
Let $\by_{t}$, $t=1,...,n+m$ follow model \eqref{model_yi}. Then,
\begin{enumerate}[(i)]
    \item under Assumption \textbf{(A3)}, for the empirical out-of-sample relative loss of the sample GMV portfolio $\hbw_{n;S}$ it holds that 
    \begin{equation}\label{th4-emp-oosL-wS}
    \left|\hat{L}_{\hbw_{n;S};m}-\frac{c}{1-c}\right|    \stackrel{a.s.}{\rightarrow} 0,
    \end{equation}
    \item under Assumption \textbf{(A3)}, for the empirical out-of-sample relative loss of the target portfolio $\bb$ it holds that 
   \begin{equation}\label{th4-emp-oosL-b}
    \left|\hat{L}_{\bb;m}-L_{\bb}\right|    \stackrel{a.s.}{\rightarrow} 0,
    \end{equation}
    \item under Assumptions \textbf{(A3)}, for the empirical out-of-sample relative loss of the shrinkage GMV portfolio $\hbw_{n;BPS}$ it holds that
    {\small\begin{equation}\label{th4-emp-oosL-wBPS}
    \left|\hat{L}_{\hbw_{n;BPS};m}-\left(\alpha_{BPS}^2\frac{c}{1-c}
    +(1-\alpha_{BPS})^2 L_{\bb}\right)\right|    \stackrel{a.s.}{\rightarrow} 0,
    \end{equation} }
    with $\alpha_{BPS}$ as in \eqref{th1-oosV-alpBPS},
    \item under Assumptions \textbf{(A3)}, for the empirical out-of-sample relative loss of the shrinkage GMV portfolio $\hbw_{n;FM}$ it holds that
    {\small\begin{equation}\label{th4-emp-oosL-wFM}
    \left|\hat{L}_{\hbw_{n;FM};m}-\left(\alpha_{FM}^2\frac{c}{1-c}
    +(1-\alpha_{FM})^2 L_{\bb}\right)\right|    \stackrel{a.s.}{\rightarrow} 0,
    \end{equation}}
    with $\alpha_{FM}$ as in \eqref{th1-oosV-alpFM},
\end{enumerate}
for $p/n \rightarrow c \in (0, 1)$ and $p/m \rightarrow c \in (0, 1)$ as $n,m\rightarrow \infty$.
\end{theorem}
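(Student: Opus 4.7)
The plan is to reduce Theorem \ref{th:th4} to Theorem \ref{th:th2} by showing that, for each of the four portfolios, the empirical out-of-sample relative loss is asymptotically equivalent to its population counterpart, i.e.\ $|\hat{L}_{\hat{\bw}_n;m} - L_{\hat{\bw}_n}| \stackrel{a.s.}{\to} 0$ (and similarly $|\hat{L}_{\bb;m} - L_{\bb}| \stackrel{a.s.}{\to} 0$ for the deterministic target). The cornerstone of this reduction is the observation that $\hat{\bw}_n$ is a measurable function of $\by_1,\ldots,\by_n$ while $\bS_{n+1:n+m}$ depends only on the disjoint second sample, so the two are independent. After conditioning on the first sample, $\hat{\bw}_n$ can be treated as a deterministic sequence of vectors with respect to the second sample, converting the problem into a classical random-matrix-theory question about quadratic forms and resolvent traces of a sample covariance matrix tested against a fixed vector.

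Concretely, I would combine two asymptotic inputs: (a) Lemma~1.3 of \cite{bodnarokhrinparolya2020}, which yields $(1-\tilde{c})\bOne^\top \bS_{n+1:n+m}^{-1}\bOne / \bOne^\top\bSigma^{-1}\bOne \stackrel{a.s.}{\to} 1$ and handles the normalization factor; and (b) the standard trace/concentration lemma for sample covariance matrices, applied conditionally on the first sample, giving $\hat{\bw}_n^\top \bS_{n+1:n+m}\hat{\bw}_n / \hat{\bw}_n^\top\bSigma\hat{\bw}_n \stackrel{a.s.}{\to} 1$ for each choice of $\hat{\bw}_n$ and for $\bb$. The empirical loss then factorizes as
\[
\hat{L}_{\hat{\bw}_n;m} + 1 = \frac{(1-\tilde{c})\bOne^\top \bS_{n+1:n+m}^{-1}\bOne}{\bOne^\top\bSigma^{-1}\bOne} \cdot \frac{\hat{\bw}_n^\top \bS_{n+1:n+m}\hat{\bw}_n}{\hat{\bw}_n^\top\bSigma\hat{\bw}_n} \cdot \bigl(L_{\hat{\bw}_n}+1\bigr),
\]
so once the two ratios tend to $1$ a.s.\ and $L_{\hat{\bw}_n}+1$ converges to its Theorem \ref{th:th2} limit, the stated limits in (i)--(iv) follow immediately. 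Cases (iii) and (iv) reduce to cases (i) and (ii) via the convex-combination structure $\hat{\bw}_n = \alpha\hat{\bw}_{n;S} + (1-\alpha)\bb$: expanding $\hat{\bw}_n^\top\bS_{n+1:n+m}\hat{\bw}_n$ into three pieces (two pure quadratic forms plus a cross term) and invoking the almost-sure limits of $\hat{\alpha}_{n;BPS}, \hat{\alpha}_{n;FM}$ already computed in the proof of Theorem \ref{th:th1}.

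The main obstacle is that Assumption \textbf{(A3)} only bounds the \emph{product} $\bOne^\top\bSigma^{-1}\bOne \cdot \bb^\top\bSigma\bb$ rather than either factor individually, so $\bOne^\top\bSigma^{-1}\bOne$, $\bb^\top\bSigma\bb$ and $\hat{\bw}_n^\top\bSigma\hat{\bw}_n$ may all scale nontrivially with $p$. This is precisely why the multiplicative factorization above is preferable to an additive decomposition $A_m B_m - AB = (A_m-A)B + A(B_m-B) + (A_m-A)(B_m-B)$: one only needs the \emph{relative} convergence of the resolvent trace and the quadratic form, which is what standard RMT arguments deliver even when the deterministic limits are unbounded. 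The technical burden therefore amounts to (i) checking that the trace lemma for $\hat{\bw}_n^\top\bS_{n+1:n+m}\hat{\bw}_n / \hat{\bw}_n^\top\bSigma\hat{\bw}_n$ holds with the required $4+\epsilon$ moment control from model \eqref{model_yi} even after conditioning on $\by_1,\ldots,\by_n$, and (ii) ruling out degenerate subsequences where $\hat{\bw}_n^\top\bSigma\hat{\bw}_n$ collapses to zero, which is handled by the lower bound $V_{GMV}\le \hat{\bw}_n^\top\bSigma\hat{\bw}_n$ inherent in the GMV problem together with Assumption \textbf{(A3)}.
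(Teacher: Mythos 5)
Your argument is correct and lands on the same limits, but it routes the reduction differently from the paper. The paper's (very terse) proof factors $\hat{L}_{\hat{\bw}_n;m}+1$ into $(1-\tilde{c})\,\bOne^\top\bS_{n+1:n+m}^{-1}\bOne/\bOne^\top\bSigma^{-1}\bOne$, which tends to one by Lemma 1.3 of \cite{bodnarokhrinparolya2020}, times $\bOne^\top\bSigma^{-1}\bOne\,\hat{V}_{\hat{\bw}_n;m}$, and then recycles the machinery behind Theorem \ref{th:th3}, i.e.\ Lemmas \ref{lem:lem1} and \ref{lem:lem3} on the mixed quadratic forms $\bxi^\top\bV_{n}^{-1}\bV_{n+1:n+m}\bV_{n}^{-1}\btheta$. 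You insert the additional factor $\hat{\bw}_n^\top\bSigma\hat{\bw}_n$ and reduce instead to the population loss of Theorem \ref{th:th2}, replacing the mixed-form lemmas by the single conditional statement $\hat{\bw}_n^\top\bS_{n+1:n+m}\hat{\bw}_n/\hat{\bw}_n^\top\bSigma\hat{\bw}_n\stackrel{a.s.}{\rightarrow}1$, obtained by treating $\hat{\bw}_n$ as deterministic given $\by_1,\ldots,\by_n$ and applying the trace lemma to the rank-one matrix $\bSigma^{1/2}\hat{\bw}_n\hat{\bw}_n^\top\bSigma^{1/2}/\hat{\bw}_n^\top\bSigma\hat{\bw}_n$, whose trace norm is identically one. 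This buys a uniform one-line treatment of all four portfolios without expanding the shrinkage weights, and it handles the normalization issue more transparently: under \textbf{(A3)} alone neither $\bOne^\top\bSigma^{-1}\bOne$ nor $\bb^\top\bSigma\bb$ need be bounded, so working with ratios (as you note) is exactly what makes the bounded-norm hypotheses of the quadratic-form lemmas available. The paper's route, by contrast, needs no new probabilistic input beyond Lemma 1.3 once Theorem \ref{th:th3} is in place. Two points you should make explicit to be fully rigorous, neither of which is a gap: the conditional trace-lemma application is a triangular-array statement ($\hat{\bw}_n$ changes with $n$ while $m\rightarrow\infty$ jointly), so the almost-sure convergence must be established uniformly in the first sample, exactly as in the proof of Lemma \ref{lem:lem1} where the limit is obtained ``for any large enough $n$''; and the centering term in $\bS_{n+1:n+m}$ contributes $(\hat{\bw}_n^\top(\bar{\by}_{n+1:n+m}-\bmu))^2$, which must be shown to be $o_{a.s.}(\hat{\bw}_n^\top\bSigma\hat{\bw}_n)$ as in the proof of Lemma \ref{lem:lem3}. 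Your observation that $\hat{\bw}_n^\top\bSigma\hat{\bw}_n\ge V_{GMV}>0$ correctly rules out degeneracy of the denominator.
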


Since the empirical out-of-sample losses $\hat{L}_{\hbw_{n;S};m}$, $\hat{L}_{\hbw_{n;BPS};m}$, and $\hat{L}_{\hbw_{n;FM};m}$ possess the same high-dimensional asymptotic behaviour as the corresponding out-of-sample losses $L_{\hbw_{n;S}}$, $L_{\hbw_{n;BPS}}$, and $L_{\hbw_{n;FM}}$ in Theorem \ref{th:th2}, the results of Corollary \ref{cor:cor1} remain also valid. Namely, we get

\begin{corollary}\label{cor:cor2}
Let $\by_{i}$, $i=1,...,n+m$ follow model \eqref{model_yi}. Then, under Assumption \textbf{(A3)} it holds that
\begin{enumerate}[(i)]
\item 
\[\hat{L}_{\hbw_{n;S;m}}-\hat{L}_{\hbw_{n;FM;m}}\stackrel{a.s.}{\rightarrow}\frac{c^2(c+L_{\bb}+cL_{\bb})}{(1-c)(c+L_{\bb})^2}\ge 0\]
for ${p}/{n}\rightarrow c \in (0, 1)$, ${p}/{m} \rightarrow \tilde{c} \in (0, 1)$ as $n,m\rightarrow \infty$,
with equality if and only if $c=0$ or $L_{\bb}=\infty$, i.e., when the sample size is considerably larger than the portfolio dimension or the target portfolio deviates too strong from the true GMV portfolio;
\item \[\hat{L}_{\hbw_{n;S;m}}-\hat{L}_{\hbw_{n;BPS;m}} \stackrel{a.s.}{\rightarrow}\frac{c^2}{(1-c)(c+(1-c)L_{\bb})} \ge 0\]
for ${p}/{n}\rightarrow c \in (0, 1)$, ${p}/{m} \rightarrow \tilde{c} \in (0, 1)$ as $n,m\rightarrow \infty$,
with equality if and only if $c=0$ or $L_{\bb}=\infty$, i.e., when the sample size is considerably larger than the portfolio dimension or the target portfolio deviates too strong from the true GMV portfolio; 
\item 
    \[\hat{L}_{\hbw_{n;FM;m}} - \hat{L}_{\hbw_{n;BPS;m}}\stackrel{a.s.}{\rightarrow}\frac{c^4 L_{\bb}^2}{(1-c)(c+L_{\bb})^2(c+(1-c)L_{\bb})} \ge 0\]
for ${p}/{n}\rightarrow c \in (0, 1)$, ${p}/{m} \rightarrow \tilde{c} \in (0, 1)$ as $n,m\rightarrow \infty$, with equality if and only if  $c=0$ or $c>0$, $L_{\bb}=0$  or $c>0$, $L_{\bb}=\infty$, i.e., when the target portfolio coincides with the true GMV portfolio or the target portfolio deviates too strong from the true GMV portfolio when the concentration ratio is positive.
\end{enumerate}
\end{corollary}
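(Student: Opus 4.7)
The plan is to deduce Corollary \ref{cor:cor2} as an essentially immediate consequence of Theorem \ref{th:th4} combined with the algebra already carried out in the proof of Corollary \ref{cor:cor1}. First I would invoke parts (i), (iii), and (iv) of Theorem \ref{th:th4}, which provide, under Assumption \textbf{(A3)} and the joint asymptotic regime $p/n\to c\in(0,1)$, $p/m\to\tilde{c}\in(0,1)$, the three almost-sure convergences
\begin{align*}
\hat{L}_{\hbw_{n;S};m} &\stackrel{a.s.}{\rightarrow} \frac{c}{1-c},\\
\hat{L}_{\hbw_{n;BPS};m} &\stackrel{a.s.}{\rightarrow} \alpha_{BPS}^2\frac{c}{1-c}+(1-\alpha_{BPS})^2 L_{\bb},\\
\hat{L}_{\hbw_{n;FM};m} &\stackrel{a.s.}{\rightarrow} \alpha_{FM}^2\frac{c}{1-c}+(1-\alpha_{FM})^2 L_{\bb}.
\end{align*}
Intersecting the three probability-one events on which these convergences hold yields a single full-probability event on which all three convergences hold simultaneously.

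On that event, each of the three pairwise differences $\hat{L}_{\hbw_{n;S};m}-\hat{L}_{\hbw_{n;FM};m}$, $\hat{L}_{\hbw_{n;S};m}-\hat{L}_{\hbw_{n;BPS};m}$, and $\hat{L}_{\hbw_{n;FM};m}-\hat{L}_{\hbw_{n;BPS};m}$ converges almost surely to the corresponding difference of the limits above. Those resulting differences of limits are exactly the quantities already computed in Corollary \ref{cor:cor1}: substituting the explicit expressions for $\alpha_{BPS}$ and $\alpha_{FM}$ from \eqref{th1-oosV-alpBPS} and \eqref{th1-oosV-alpFM} and performing the same algebraic simplifications as in that proof yields
\[
\frac{c^2(c+L_{\bb}+cL_{\bb})}{(1-c)(c+L_{\bb})^2},\qquad
\frac{c^2}{(1-c)(c+(1-c)L_{\bb})},\qquad
\frac{c^4 L_{\bb}^2}{(1-c)(c+L_{\bb})^2(c+(1-c)L_{\bb})},
\]
respectively. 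Since Corollary \ref{cor:cor1} has already carried out these manipulations, I would simply cite that computation rather than redo it.

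Finally, the non-negativity statements and the characterizations of the equality cases are inherited verbatim from the corresponding parts of Corollary \ref{cor:cor1}: the limits agree, so the analysis of which values of $c\in[0,1)$ and $L_{\bb}\in[0,\infty]$ force each expression to vanish is identical. There is no genuine technical obstacle in this proof; the only point worth flagging is that the joint high-dimensional regime $p/n\to c\in(0,1)$ and $p/m\to\tilde{c}\in(0,1)$ is what permits the appeal to Theorem \ref{th:th4}, where the additional randomness coming from the out-of-sample covariance estimator $\bS_{n+1:n+m}$ has already been absorbed. Once Theorem \ref{th:th4} is available, the remainder is a purely algebraic reduction.
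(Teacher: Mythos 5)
Your proposal is correct and follows essentially the same route as the paper: the authors likewise observe that the empirical out-of-sample losses in Theorem \ref{th:th4} share the limits of Theorem \ref{th:th2}, so the pairwise differences converge almost surely to the differences of those limits, and the algebraic simplifications and equality-case analysis are inherited directly from Corollary \ref{cor:cor1}. No gap.
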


Corollary \ref{cor:cor2} provides the limiting behaviour of the differences of the empirical out-of-sample losses and, consequently, the same ranking between the three estimators of the GMV portfolio weights as previously obtained in Corollary \ref{cor:cor1}. Furthermore, the difference between the asymptotic behaviour of the three estimator is negligible only when the concentration ratio is zero, i.e., the portfolio size is considerably smaller than the sample size, or when the target portfolio is poorly chosen such that its relative loss becomes infinity.

The asymptotic differences between the relative losses of these three estimators are depicted as functions in $c \in (0,1)$ for several values of $L_{\bb} \in (0,50)$ in Figure \ref{fig:cor_limits1}. Larger differences are observed when the shrinkage estimator of \cite{bodnar2018estimation} is compared to the traditional estimator and the shrinkage estimator of \cite{frahm2010}, especially when $c$ is close to one. On the other side, the asymptotic difference between the empirical out-of-sample relative loss functions computed for the traditional estimator and the shrinkage estimator of \cite{frahm2010} is large only when $L_b$ is close to zero, i.e., when the target portfolio $\bb$ is close to the true population GMV portfolio.


\begin{figure}
    \centering
    \includegraphics[width=\textwidth]{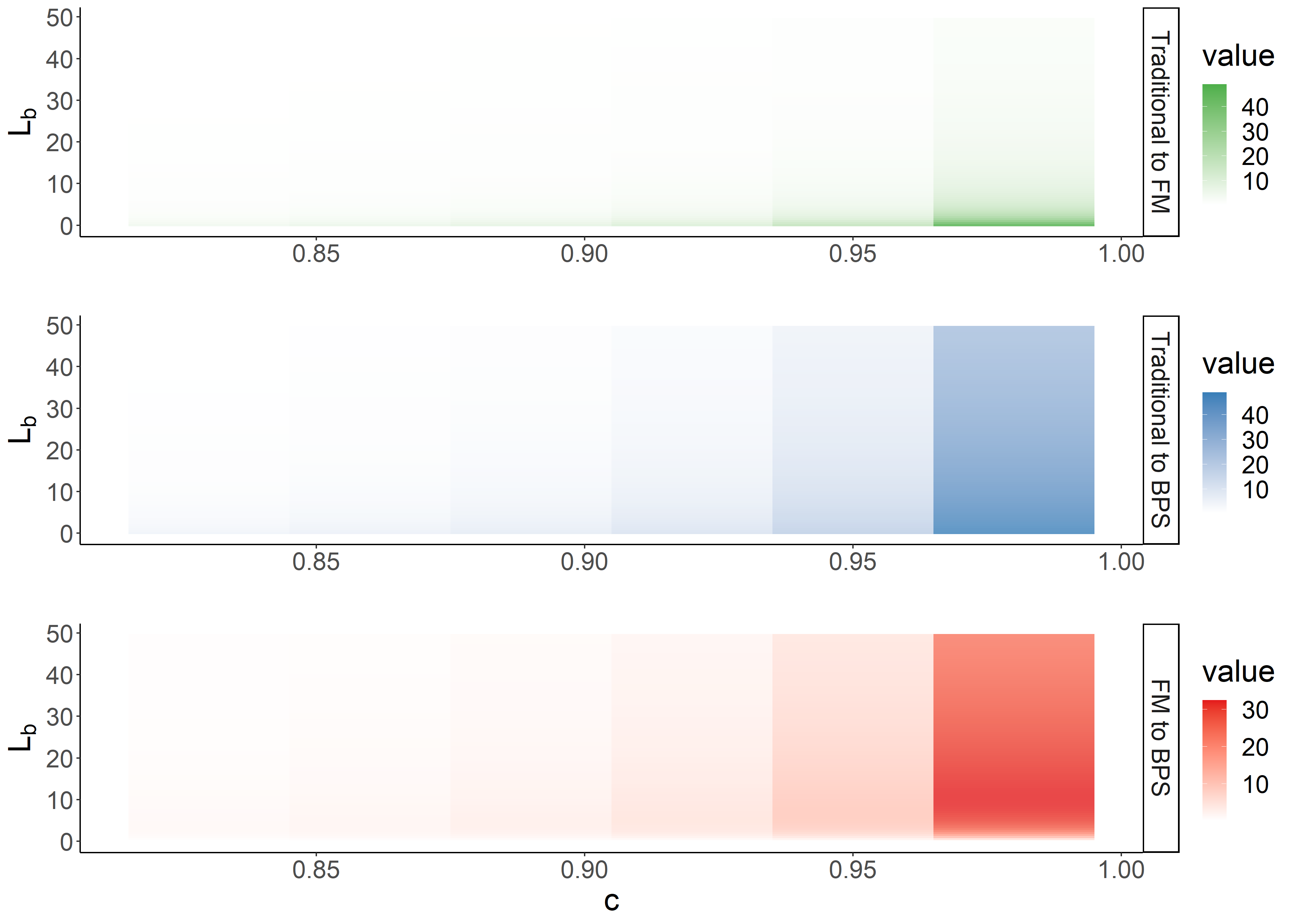}
    \caption{Asymptotic differences between the empirical out-of-sample relative losses limits from Corollary \ref{cor:cor2} for $c \in (0,1)$ and $L_b \in (0,50)$.}
    \label{fig:cor_limits1}
\end{figure}


\section{Simulation study}\label{sec:sim}
In this section we will investigate the finite sample behaviour of the high-dimensional asymptotic results presented in Corollary \ref{cor:cor2} via an extensive Monte Carlo study. The aim of the study is twofold: (i) first, we investigate how fast the difference of the empirical out-of-sample relative loss functions tend to the corresponding limiting value provided in the statement of Corollary \ref{cor:cor2}; (ii) second, we study the impact of the presence of linear and non-linear time dependence in the data-generating model on the performance of the three considered trading strategies.

For each fixed value of the portfolio size $p$ we first simulated the elements of the mean vector $\bmu$ as $\mu_i \sim U(-0.1,0.1)$, $i=1,2,...,p$ and the elements of the covariance matrix $\bSigma$ using the \texttt{RandCovMtrx} function from the \texttt{HDShOP} package (\cite{HDShOP}). Then these values were used in simulating samples of the asset returns from the following three data-generating models:

\begin{description}
    \item[Scenario 1: $t$-distribution]
    The elements of $\bx_t$ are drawn independently from the $t$-distribution with $5$ degrees of freedom, that is $x_{tj}\sim t(5)$ for $j=1,...,p$, while $\by_t$ is constructed according to  \eqref{model_yi}. Moreover since the variance of the t-distribution with $5$ degrees of freedom is equal to $5/3$ we, additionally multiply the vector $\bx_t$ in \eqref{model_yi} by $\sqrt{3/5}$. As such, all $\sqrt{3/5}x_{tj}$ have mean zero and variance one.
    \item[Scenario 2: VAR model] 
    The vector of asset returns $\by_t$ is simulated according to a 
    $$
    \by_{t} = \bmu + \mathbf{\Gamma} (\by_{t-1}-\bmu) + \bSigma^{1/2}\bx_t
    \quad \text{with} \quad 
    \bx_t\sim N_p(\mathbf{0},\mathbf{I})
    $$
    for $t=1,...,n+m$, where $\mathbf{\Gamma} = \operatorname{diag}(\gamma_1, \gamma_2,..., \gamma_p)$ with $\gamma_i \sim U(-0.9, 0.9)$ for $i=1,...,p$. We note that in the case of the VAR model, the covariance matrix of $\by_t$ is computed as $\text{vec}(\mathbb{V}ar(\by))=(\bI-\mathbf{\Gamma}\otimes \mathbf{\Gamma})^{-1}\text{vec}(\bSigma)$ where $\text{vec}$ denotes the vec operator. This matrix is used in the computation of the limiting differences from Corollary \ref{cor:cor2}.
    \item[Scenario 3: CCC-GARCH model of \cite{bollerslev1990modelling}] The asset returns are simulated according to 
    $$\by_t | \bSigma_t \sim N_p(\bmu, \bSigma_t)$$
    where the conditional covariance matrix is specified by 
    $$\bSigma_t = \bD_t^{1/2} \bC \bD_t^{1/2}
    \quad \text{with} \quad \bD_t = \operatorname{diag}(h_{1,t}, h_{2,t}, ..., h_{p,t}),$$
    with
    $$
    h_{j,t} = \alpha_{j,0} + \alpha_{j,1} (\by_{j, t-1} - \bmu_j)^2 + \beta_{j,1} h_{j, t-1}, \text{ for } j=1,2,...,p, \text{ and } t=1,2,...,n+m.
    $$
    The coefficients of the CCC-GARCH model are generated by $\alpha_{j,1} \sim U(0,0.1)$ and $\beta_{j,1} \sim U(0.6,0.7)$ which implies that the stationarity conditions, $\alpha_{j,1} + \beta_{j,1} < 1$, are always fulfilled. The intercepts $\alpha_{j,0}$, $j=1,...,p$ is thereafter chosen such that the unconditional covariance matrix is equal to $\bSigma$.
\end{description}

The model under scenario 1 fulfills the assumptions imposed in Section 2 by drawing the vector $\bx_t$ independently each of other. In contrast, scenarios 2 and 3 possess some time dependence structure, thus violating the assumption imposed on the data-generating model in Section \ref{sec:oosv}. While the VAR model from scenario 2 is used to investigate the performance of three portfolio selection strategies when the asset returns ${\by_t}$ are assumed to be autocorrelated, a  more complicated non-linear time dependence structure is assumed in scenario 3 which is accompanied with conditionally time-dependent covariance matrix $\bSigma_t$. Finally, the equally weighted portfolio is used as a target portfolio in all scenarios.

In Figures \ref{fig:scenario1_cor1} to \ref{fig:scenario3_cor1} we present the relative differences of empirical out-of-sample losses as considered in Corollary \ref{cor:cor2} divided by the corresponding asymptotic limit determined for each difference in the statement of the corollary in the right hand-side of each inequality. For each scenario we set $n=\{100, 250, 500, 750, 1000\}$, $c=\{0.5, 0.9\}$ and $\tilde{c}=\{0.5, 0.9\}$. The portfolio size $p$ and the sample size $m$ are thereafter determined by $p=nc$ and in turn $m=p/\tilde{c}$. If necessary we round to the closest integer. The results in the figures are based on the 1000 independent repetitions and present the corresponding average values.

Figure \ref{fig:scenario1_cor1} depicts the results of the simulation study obtained under scenario 1. The relative differences in the empirical out-of-sample losses converge quickly to one, indicating that the results of Corollary \ref{cor:cor2} may also be used when samples of asset returns of moderate size are used. As expected, the fastest convergence is observed in the case $c=\tilde{c}=0.5$, while the largest deviations from one is present in the case of $c=0.5$ and $\tilde{c}=0.9$, when the sample size is small. Finally, we note that all computed values in the plots are positive and, as such, the shrinkage estimator of \cite{bodnar2018estimation} outperforms the other two trading strategies followed by the shrinkage approach of \cite{frahm2010} in all of the considered cases.

\begin{figure}
    \centering
    \includegraphics[width=\textwidth]{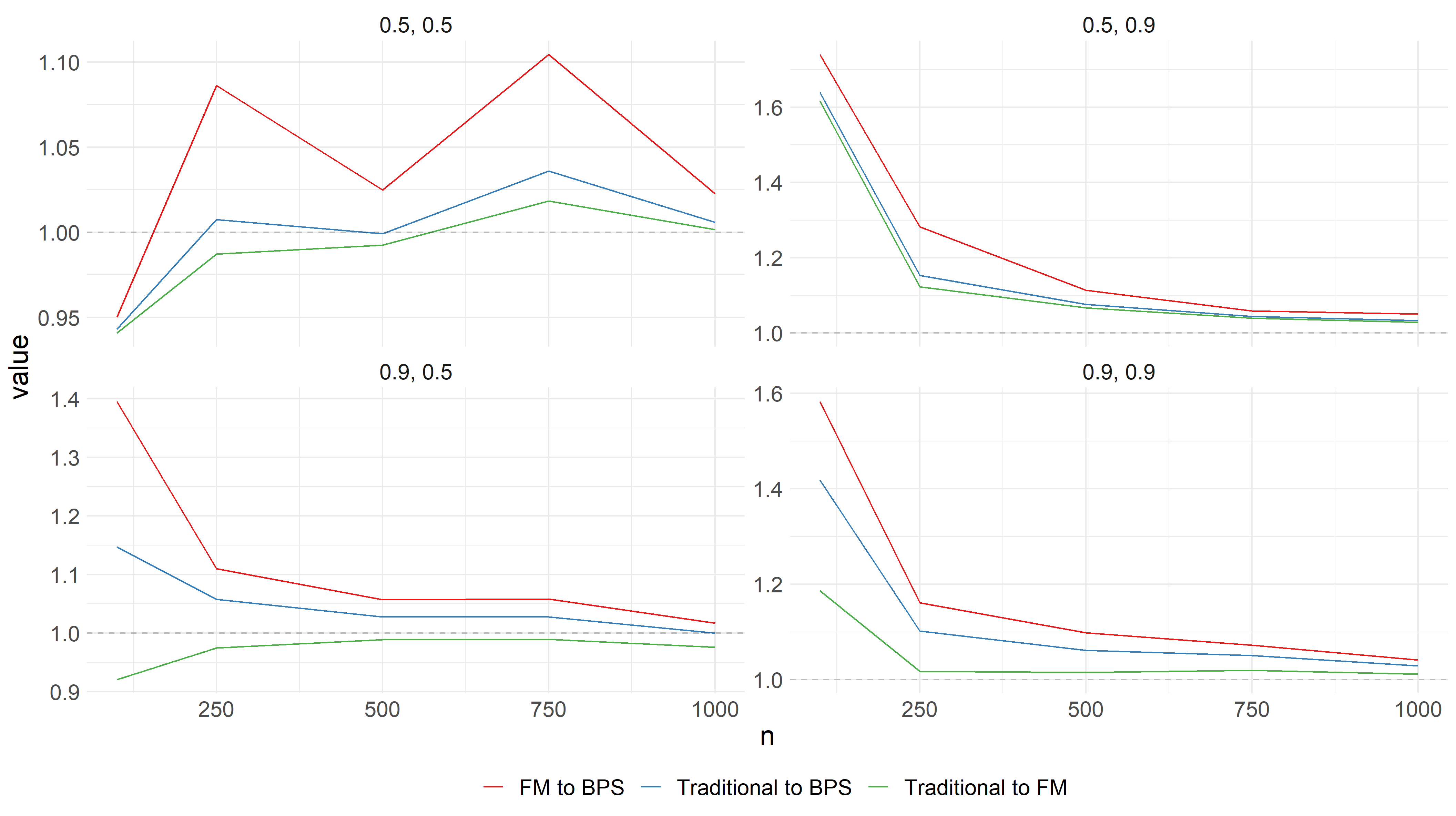}
    \caption{Relative differences in the empirical out-of-sample losses divided by the corresponding asymptotic lower bound as given in Corollary \ref{cor:cor2} for $n=\{100, 250, 500, 750, 1000\}$, $c=\{0.5, 0.9\}$ and $\tilde{c}=\{0.5, 0.9\}$. The samples of asset returns are drawn following scenario 1.}
    \label{fig:scenario1_cor1}
\end{figure}

In Figure \ref{fig:scenario2_cor1} the results of the simulation study obtained under scenario 2 are present. This scenario imposes linear time dependence structure on the vector of asset returns and, thus, it breaks the model assumption that Corollary \ref{cor:cor2} is derived from. This can also be seen in the computed relative differences of losses. In contrast to the values shown in Figure \ref{fig:scenario1_cor1} the empirical out-of-sample relative losses do not converge to one in Figure \ref{fig:scenario2_cor1}. This indicates that the presence of linear time dependencies has an impact on the limiting properties on the empirical out-of-sample loss functions. On the other hand, the relative differences depicted in Figure \ref{fig:scenario2_cor1} are all positive and thus the ranking between the three estimation strategies remains unchanged. Moreover, the relative differences converge to the values which are larger than one, meaning that the derived limiting values in Corollary \ref{cor:cor2} can still be employed as lower bounds.

\begin{figure}
    \centering
    \includegraphics[width=\textwidth]{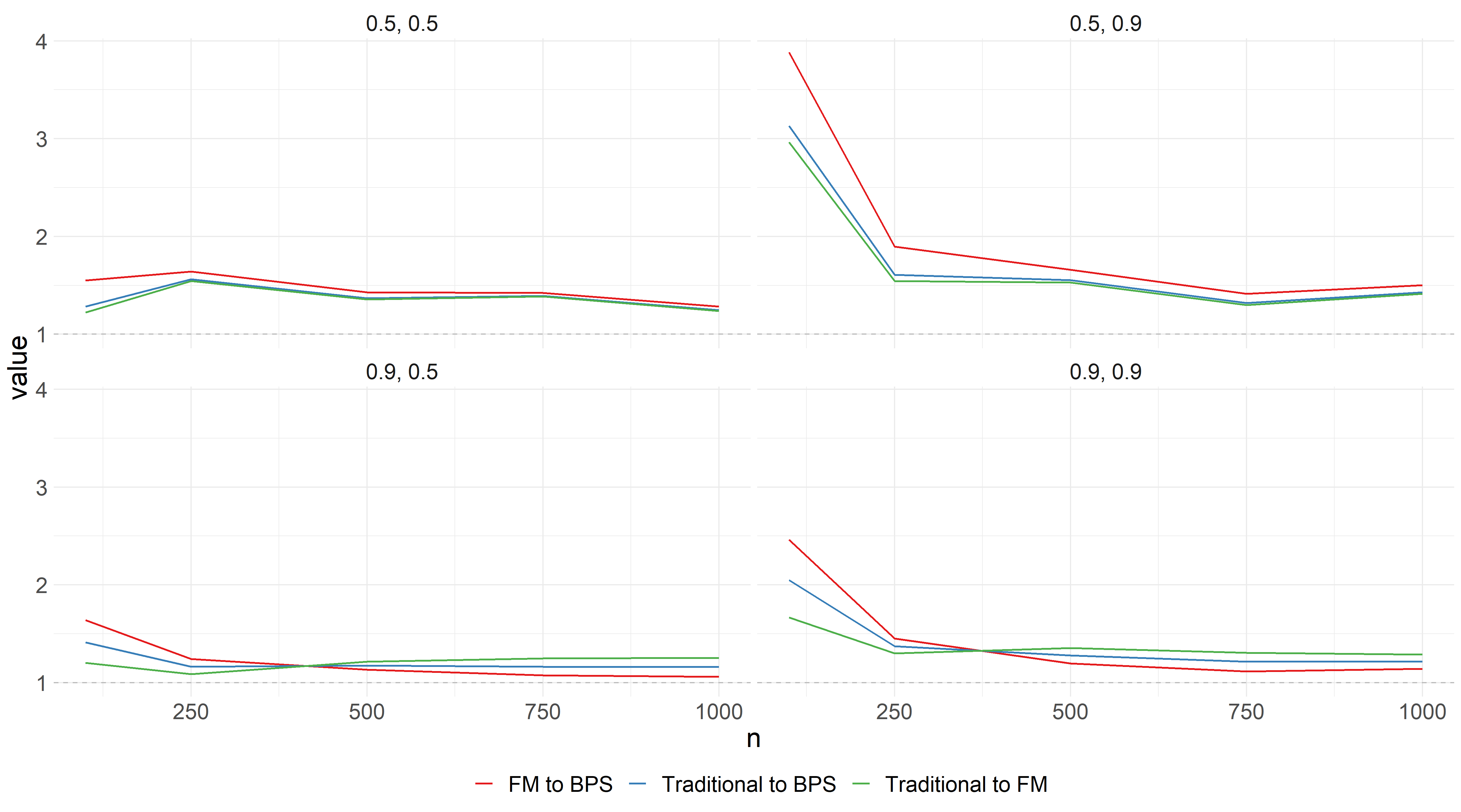}
    \caption{Relative differences in the empirical out-of-sample losses divided by the corresponding asymptotic lower bound as given in Corollary \ref{cor:cor2} for $n=\{100, 250, 500, 750, 1000\}$, $c=\{0.5, 0.9\}$ and $\tilde{c}=\{0.5, 0.9\}$. The samples of asset returns are drawn following scenario 2.}
    \label{fig:scenario2_cor1}
\end{figure}

Figure \ref{fig:scenario3_cor1} illustrates the results of the simulation study under the last scenario. In this setting the returns are simulated from a CCC-GARCH model which captures volatility clustering and also introduces a non-linear time dependence structure in the vectors of the asset returns. Similarly to scenario 2, the relative differences do not converge to one, although the departure from one is considerably smaller as observed in the case of scenario 2. As such, a conclusion can be drawn that the presence of linear time dependence structure has larger impact on the asymptotic behaviour of the empirical out-of-sample losses than the non-linear one. Also, in scenario 3, the relative losses converge to the values which are larger one and the computed values are all positive. As such, the ranking between the three trading strategies is preserved and one can also us the expression of the limiting values of Corollary \ref{cor:cor2} as the corresponding lower bounds for the differences under the assumption of the CCC-GARCH model.

\begin{figure}
    \centering
    \includegraphics[width=\textwidth]{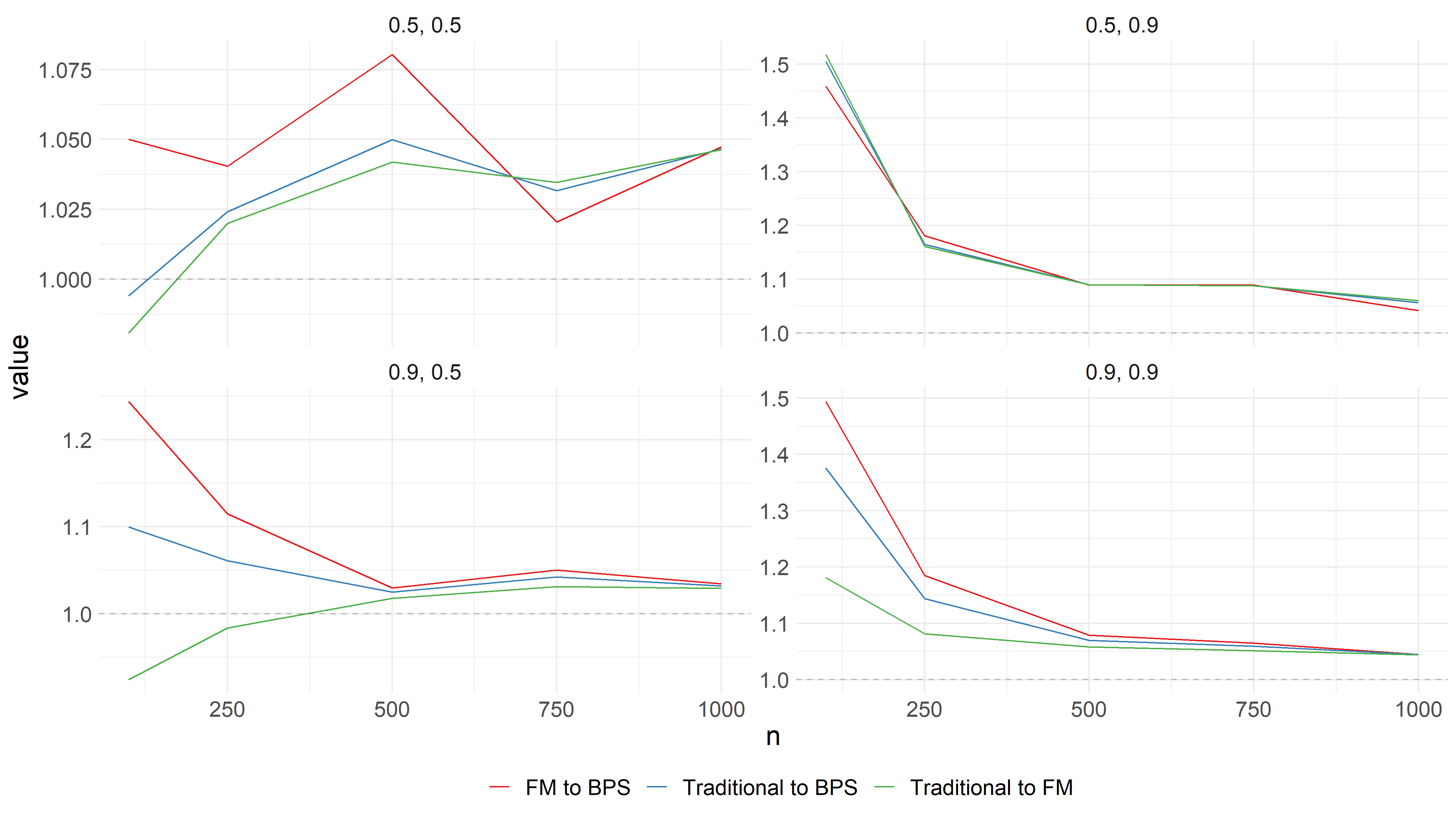}
    \caption{Relative differences in the empirical out-of-sample losses divided by the corresponding asymptotic lower bound as given in Corollary \ref{cor:cor2} for $n=\{100, 250, 500, 750, 1000\}$, $c=\{0.5, 0.9\}$ and $\tilde{c}=\{0.5, 0.9\}$. The samples of asset returns are drawn following scenario 3.}
    \label{fig:scenario3_cor1}
\end{figure}

\section{Empirical illustration}\label{sec:emp}
In the empirical application we use 10 years of daily data for 100 and 190 stocks included in the S\&P500 index from the first of June 2011 to the seventh of January 2021. During the considered period of time, 380 stocks were continuously included in the the S\&P500 index from which we randomly choose 100 and 190 stocks to build the GMV portfolio. The first $n=200$ observations were used to estimate the weights of the GMV portfolio by employing the traditional estimator and the two shrinkage estimators introduced in Section \ref{sec:oosv}, while the next $m=200$ observations were used to compute the values of the empirical out-of-sample variances and the empirical out-of-sample relative loses for each trading strategy. Then, using the rolling window approach the same computations are subsequently performed over the time period from the fourteenth of February, 2013 to the seventh of January 2021. As a target portfolio in the construction of the two shrinkage estimator, the equally weighted portfolio was used.

Figure \ref{fig:overlap} depicts the values of the empirical out-of-sample variances and of the empirical out-of-sample relative losses computed for three estimators of the GMV portfolio considered in the paper. The result are presented for two portfolio sizes which correspond to $c=\tilde{c}=0.5$ and $c=\tilde{c}=0.95$. A considerable increase in both the empirical out-of sample variances and losses of each estimator is observed in March 2020 which corresponds to the crisis on international financial market caused by the beginning of COVID-19 spread over the world. The rapid increase of volatility is more pronounced in the case of the smaller dimensional portfolio, i.e., when $p=100$. In the case of the portfolio which is based on $p=190$ stocks the jump in the values of the two considered performance measures is smoothed due to higher variability of these two measures presented during the whole period of observation. Another rapid increase in the loss functions for $p=100$ occurs in late December 2020. This date can be related to the second wave of the COVID-19 spread. Similar increases in the behaviour of the relative loss function are also present for the portfolio consisting of $p=190$ stocks, although they are somehow hidden by the more volatile behavior of the loss function in the latter case.

In general, the results in Figure \ref{fig:overlap} confirms the ordering of the three trading strategies which is deduced in Corollary \ref{cor:cor2} and confirmed in the finite-sample case in the simulation study of Section \ref{sec:sim}. Namely, the shrinkage estimator of \cite{bodnar2018estimation} shows the smallest values of both the empirical out-of-sample variance and the empirical out-of-sample relative loss, while the shrinkage estimator of \cite{frahm2010} is ranked on the second place. On the other side, when the empirical out-of-sample variance is used as a performance measure, the distinction between the strategies become visually negligible in almost all cases presented for $p=100$ and in majority of cases when the portfolio with $p=190$ is constructed. This empirical finding can be explained by noting that most of the values of the empirical out-of-sample variance were computed during the stable period on the capital market and as such, the true value of the global minimum variance was very small at that time. In contrast, the usage of the empirical out-of-sample loss can lead to the obvious conclusion about the performance of each of the considered three trading strategies. Finally, the impact of portfolio dimensionality which is accompanied with a huge amount of estimation error becomes more pronounced when the empirical relative loss is used, especially during the turbulent period on the capital market.  

\begin{figure}
    \centering
    \includegraphics[width=16cm]{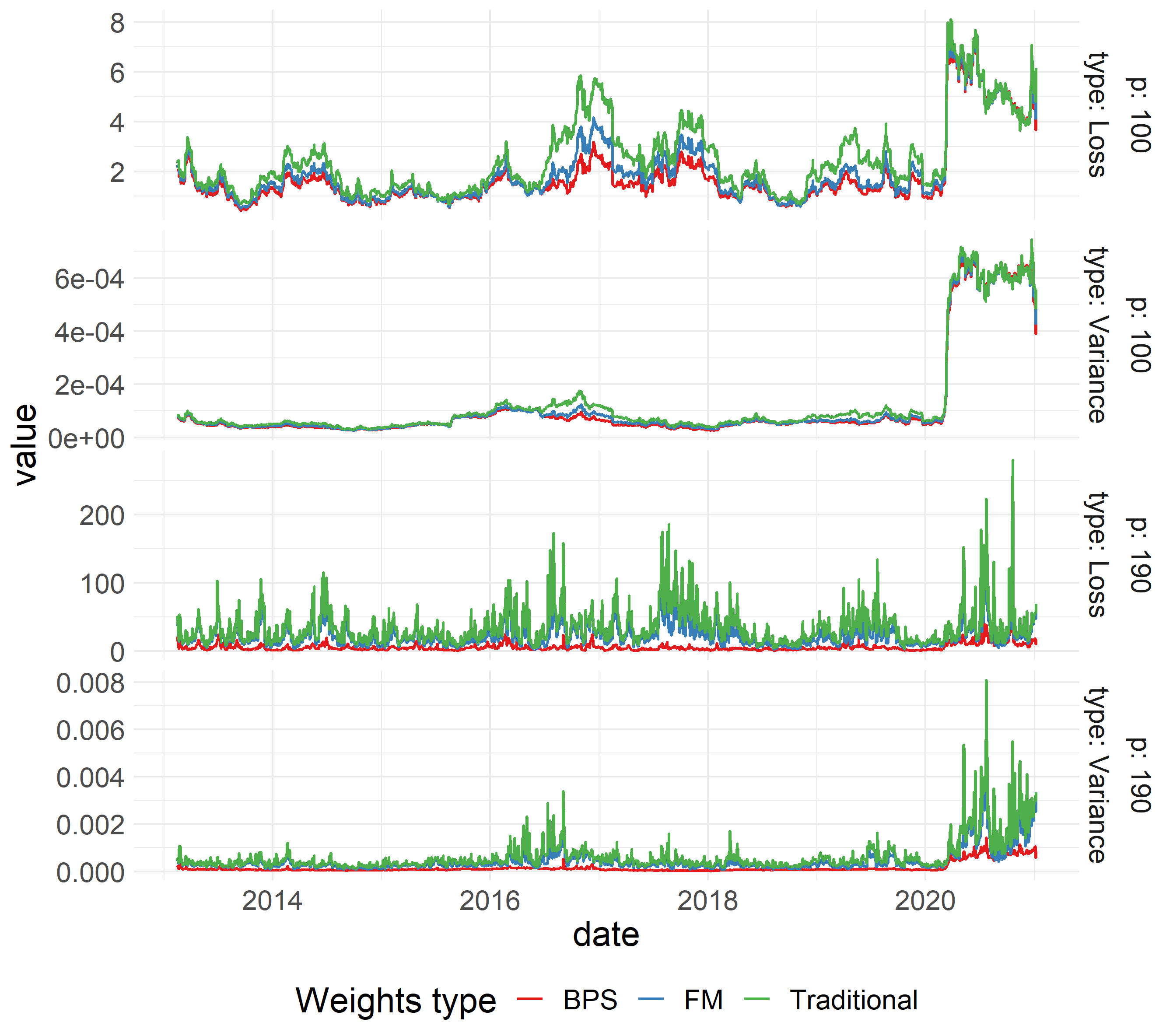}
    \caption{Empirical out-of-sample variance and out-of-sample relative loss of the the traditional GMV portfolio and the two shrinkage estimators based on the rolling window approach with window size equal $200$ and computed for two portfolios which consist of $100$ and $190$ stocks traded in the S\&P 500 index.}
    \label{fig:overlap}
\end{figure}

\section{Summary}\label{sec:sum}
The sample variance of the GMV portfolio is known to be biased and to significantly underestimate the true population variance of this portfolio, especially when the portfolio size is comparable to the sample size. In many practical situations it is not a good measure for the portfolio performance and the out-of-sample variance is usually used instead. 

In this paper we derive the asymptotic properties of the out-of-sample variance and of the out-of-sample relative loss as well as of their empirical counterparts. Under weak conditions imposed on the data-generating model it is shown that the out-of-sample variance and the empirical out-of-sample variance might tend to zero independently of chosen estimator of the GMV portfolio weights, which can make the comparison between the trading strategies intractable. This is not, however, an issue when the out-of-sample relative loss and the empirical out-of-sample relative loss are used instead. In the latter case a clear ordering between the estimators of the three considered estimator can be made.

As a by product of the derived theoretical findings, we also prove that the shrinkage estimator of \cite{bodnar2018estimation} outperforms the shrinkage estimator of \cite{frahm2010} and the traditional estimator of the GMV portfolio. Moreover, we quantify the difference in the performance of the three trading strategies by deducing the asymptotic difference of their empirical out-of-sample relative loss functions. Within a comprehensive numerical study it is shown that the derive asymptotic limits can still be used when the sample of moderate size is present and when the asset returns possess both linear and non-linear time dependence structure.

\section{Appendix}\label{sec:app}

\begin{proof}[Proof of Theorem \ref{th:th1}]
\begin{enumerate}[(i)]
    \item It holds that
\[V_{\hbw_{n;S}}=\hbw_{n;S}^\top \bSigma \hbw_{n;S}=\frac{\bOne^\top\bS_n^{-1} \bSigma\bS_n^{-1}\bOne}{(\bOne^\top\bS_n^{-1}\bOne)^2},\]
where (see, proof of Lemma 1.3 in \cite{bodnarokhrinparolya2020})
\begin{eqnarray*}
|\bOne^\top\bS_n^{-1} \bSigma\bS_n^{-1}\bOne-(1-c)^{-3}\bOne^\top \bSigma^{-1}\bOne| \stackrel{a.s.}{\rightarrow} 0,\\
|\bOne^\top\bS_n^{-1}\bOne-(1-c)^{-1}\bOne^\top \bSigma^{-1}\bOne| \stackrel{a.s.}{\rightarrow} 0,
\end{eqnarray*}
for $p/n \to c \in (0,1)$ as $n \to \infty$. Combining these two results we get the first statement of the theorem.
    \item It holds that
\begin{eqnarray*}
V_{\hbw_{n;BPS}}&=&\left(\hat{\alpha}_{n;BPS} \hat{\mathbf{w}}_{n;S} + (1- \hat{\alpha}_{n;BPS})\bb\right)^\top \bSigma \left(\hat{\alpha}_{n;BPS} \hat{\mathbf{w}}_{n;S} + (1- \hat{\alpha}_{n;BPS})\bb\right)\\
&=&\hat{\alpha}_{n;BPS}^2\hat{\mathbf{w}}_{n;S}^\top\bSigma \hat{\mathbf{w}}_{n;S}+2\hat{\alpha}_{n;BPS} (1-\hat{\alpha}_{n;BPS} )\hat{\mathbf{w}}_{n;S}^\top\bSigma\bb+(1-\hat{\alpha}_{n;BPS} )^2\bb^\top\bSigma\bb,
\end{eqnarray*}
where from part (i)  
\[|\hat{\mathbf{w}}_{n;S}^\top\bSigma \hat{\mathbf{w}}_{n;S}-(1-c)^{-1}V_{GMV}| \stackrel{a.s.}{\rightarrow} 0
\quad \textbf{for $p/n \to c \in (0,1)$ as $n \to \infty$.} \]
Moreover, we get (see, Theorem 2.1 in \cite{bodnar2018estimation}) 
\[\hat{\alpha}_{n;BPS}\stackrel{a.s.}{\rightarrow}\alpha_{BPS}=\frac{(1-c)L_{\bb}}{c+(1-c)L_{\bb}}\]  
and
\[\hat{\mathbf{w}}_{n;S}^\top\bSigma\bb
=\frac{\bOne^\top \bS_n^{-1}\bSigma\bb}{\bOne^\top\bS_n^{-1}\bOne}
\stackrel{a.s.}{\rightarrow} \frac{(1-c)^{-1}}{(1-c)^{-1}\bOne^\top \bSigma^{-1}\bOne}=V_{GMV}
\]
for $p/n \to c \in (0,1)$ as $n \to \infty$. Putting these results together we get the statement of Theorem \ref{th:th1}.(ii).
\item The result of part (iii) follows from the proofs of parts (i) and (ii).
\end{enumerate}
\end{proof}

\begin{proof}[Proof of Theorem \ref{th:th2}]
The results of Theorem \ref{th:th2} follows from Theorem \ref{th:th1} and the definition of the relative loss.
\end{proof}

\vspace{0.5cm}
In the proofs of Theorems \ref{th:th3} and \ref{th:th4} we use the results of two technical lemmas presented below. Let
\begin{eqnarray*}\widetilde{\bV}_{1:n}&=&\frac{1}{n}\bX_{1:n}\bX_{1:n}^\top\quad \text{with} \quad
\bX_{1:n}= (\bx_{1},...,\bx_{n}),\\
\widetilde{\bV}_{n+1:n+m}&=&\frac{1}{m}\bX_{n+1:n+m}\bX_{n+1:n+m}^\top\quad \text{with} \quad
\bX_{n+1:n+m}= (\bx_{n+1},...,\bx_{n+m}),
\end{eqnarray*}
and define
\[\bV_{1:n}=\frac{1}{n-1}\bX_{1:n}\bX_{1:n}^\top - \frac{n}{n-1} \bar{\bx}_{1:n}\bar{\bx}_{1:n}^\top \quad \text{with} \quad
\bar{\bx}_{1:n}=\frac{1}{n}\bX_{1:n} \bOne_n
\]
and 
\[\bV_{n+1:n+m}=\frac{1}{m-1}\bX_{n+1:n+m}\bX_{n+1:n+m}^\top - \frac{m}{m-1} \bar{\bx}_{n+1:n+m}\bar{\bx}_{n+1:n+m}^\top, \;
\bar{\bx}_{n+1:n+m}=\frac{1}{m}\bX_{n+1:n+m} \bOne_m.
\]

Then, we have
\begin{equation}\label{SV-app}
\bS_{1:n}=\bSigma^{1/2}\bV_{1:n}\bSigma^{1/2}\; \text{and} \;
\bS_{n+1:n+m}=\bSigma^{1/2}\bV_{n+1:n+m}\bSigma^{1/2}.
\end{equation}

\begin{lemma}\label{lem:lem1}
Let $\bxi$ and $\btheta$ be two nonrandom vectors with bounded Euclidean norms. 
Assume that $m,n>1$. Then it holds that
    \begin{align}
        \left|\bxi^\top \widetilde{\bV}_{n}^{-1} \widetilde{\bV}_{n+1:n+m}
        \widetilde{\bV}_{n}^{-1}\btheta -(1-c)^{-3}\bxi^\top\btheta  \right| &\stackrel{a.s.}{\rightarrow} 0, \label{eqn:lem1eq1}
    \end{align}
and
    \begin{align}
        \left|\bxi^\top \widetilde{\bV}_{n}^{-1} \widetilde{\bV}_{n+1:n+m}
        \btheta -(1-c)^{-1}\bxi^\top\btheta  \right| &\stackrel{a.s.}{\rightarrow} 0, \label{eqn:lem1eq2}
    \end{align}
for $p/n \rightarrow c \in (0, 1)$ and $p/m \rightarrow \tilde{c} \in (0, \infty)$
    as $n\rightarrow \infty$.
\end{lemma}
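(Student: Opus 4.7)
The plan is to exploit the independence of $\widetilde{\bV}_{n}$ (a function of $\bx_{1},\ldots,\bx_{n}$) and $\widetilde{\bV}_{n+1:n+m}$ (a function of $\bx_{n+1},\ldots,\bx_{n+m}$), which is the key structural feature of the statement, and to reduce both convergences to (a) a conditional concentration step that removes $\widetilde{\bV}_{n+1:n+m}$ by replacing it with its mean $\bI$, and then (b) a standard Marchenko--Pastur quadratic-form step that evaluates $\bxi^\top\widetilde{\bV}_{n}^{-1}\btheta$ and $\bxi^\top\widetilde{\bV}_{n}^{-2}\btheta$ at their asymptotic limits.

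I would first condition on $\mathcal{F}_{n}=\sigma(\bx_{1},\ldots,\bx_{n})$. Setting $\bu=\widetilde{\bV}_{n}^{-1}\bxi$ and $\bv=\widetilde{\bV}_{n}^{-1}\btheta$, the first bilinear form is $\bu^\top\widetilde{\bV}_{n+1:n+m}\bv$ with conditional mean $\bu^\top\bv=\bxi^\top\widetilde{\bV}_{n}^{-2}\btheta$, and the second is $\bxi^\top\widetilde{\bV}_{n}^{-1}\widetilde{\bV}_{n+1:n+m}\btheta$ with conditional mean $\bxi^\top\widetilde{\bV}_{n}^{-1}\btheta$. To control the conditional fluctuations, I would write
\begin{equation*}
\bu^\top\widetilde{\bV}_{n+1:n+m}\bv - \bu^\top\bv
= \frac{1}{m}\sum_{i=n+1}^{n+m}\bigl(\bu^\top\bx_i\bx_i^\top\bv - \bu^\top\bv\bigr),
\end{equation*}
whose conditional variance, using the $(4+\varepsilon)$-moment assumption on the entries of $\bx_{i}$, is bounded by $C\|\bu\|^{2}\|\bv\|^{2}/m$. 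The almost-sure lower bound on $\lambda_{\min}(\widetilde{\bV}_{n})$ furnished by the Bai--Yin theorem (converging to $(1-\sqrt{c})^{2}>0$) gives that $\|\bu\|,\|\bv\|$ are eventually bounded, so a fourth-moment Chebyshev bound plus Borel--Cantelli yields almost-sure convergence to zero of the difference. The same argument, with $\bv$ replaced by the fixed vector $\btheta$, handles the simpler second statement.

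It remains to identify the two deterministic limits. For the resolvent-type quadratic form $\bxi^\top\widetilde{\bV}_{n}^{-1}\btheta$, the standard Marchenko--Pastur / Bai--Silverstein machinery shows $|\bxi^\top\widetilde{\bV}_{n}^{-1}\btheta-m(0)\,\bxi^\top\btheta|\stackrel{a.s.}{\to}0$, where $m$ is the Stieltjes transform of the Marchenko--Pastur law, and $m(0)=(1-c)^{-1}$. For the squared-resolvent form, differentiating the Marchenko--Pastur fixed-point equation $m(z)(1-c-z-czm(z))=1$ at $z=0$ gives $m'(0)=(1-c)^{-3}$, and the same quadratic-form convergence shows $|\bxi^\top\widetilde{\bV}_{n}^{-2}\btheta-(1-c)^{-3}\bxi^\top\btheta|\stackrel{a.s.}{\to}0$ (this can also be obtained by a contour derivative of the resolvent identity). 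Combining these with the conditional concentration step and the triangle inequality yields both claims.

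The main obstacle is the first step: one must make the conditional-variance bound uniform in $n,p,m$ and translate it into almost-sure convergence along the joint asymptotic regime $p/n\to c$, $p/m\to\tilde c$. The delicate ingredient is the almost-sure bound on $\|\widetilde{\bV}_{n}^{-1}\|_{\mathrm{op}}$; without Bai--Yin one cannot rule out a rare event on which $\|\bu\|$ blows up and the conditional variance explodes. Beyond that, everything reduces to independence, a law-of-large-numbers-type estimate for quadratic forms with bounded deterministic weights, and well-known Marchenko--Pastur asymptotics for the (derivative of the) Stieltjes transform at zero.
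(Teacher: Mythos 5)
Your overall strategy coincides with the paper's: both proofs split the error by the triangle inequality into (a) replacing $\widetilde{\bV}_{n+1:n+m}$ by the identity, using the independence of the two samples together with a law of large numbers for quadratic forms, and (b) the known limits $\bxi^\top \widetilde{\bV}_{n}^{-2}\btheta \to (1-c)^{-3}\bxi^\top\btheta$ and $\bxi^\top \widetilde{\bV}_{n}^{-1}\btheta \to (1-c)^{-1}\bxi^\top\btheta$, for which the paper cites Lemma 1.3 of Bodnar, Okhrin and Parolya (2021) --- exactly your Stieltjes-transform computation at $z=0$. The only real difference is in step (a): the paper writes $\bxi^\top \widetilde{\bV}_{n}^{-1}\widetilde{\bV}_{n+1:n+m}\widetilde{\bV}_{n}^{-1}\btheta = \frac{1}{m}\sum_{j}\bx_j^\top \widetilde{\bV}_{n}^{-1}\btheta\,\bxi^\top \widetilde{\bV}_{n}^{-1}\bx_j$, checks that the weight matrix has asymptotically bounded trace norm, and invokes Lemma 4 of Rubio and Mestre (2011), whereas you redo the concentration by hand.

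Your hand-made version of that step has a gap as stated. A conditional variance bound of order $\|\bu\|^{2}\|\bv\|^{2}/m$ only gives tail probabilities summing like $\sum_m 1/m = \infty$, so second-order Chebyshev plus Borel--Cantelli does not yield almost-sure convergence; and upgrading to a genuine fourth-moment bound on $\bu^\top\bx_i\bx_i^\top\bv - \bu^\top\bv$ requires finite \emph{eighth} moments of the entries of $\bx_i$, which the model does not assume (only $4+\epsilon$). The standard repair is the moment inequality for quadratic forms at order $q = 2+\epsilon/2$ (Bai--Silverstein, Lemma B.26, or equivalently the Rubio--Mestre lemma the paper cites), which needs only the assumed $4+\epsilon$ entry moments and produces a summable tail. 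With that substitution --- and keeping your Bai--Yin control of $\lambda_{\min}(\widetilde{\bV}_{n})$, which is also implicitly needed for the paper's trace-norm bound --- your argument closes and is essentially the published proof.
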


\begin{proof}[Proof of Lemma \ref{lem:lem1}]
It holds that
\begin{eqnarray*}
&&\left|\bxi^\top \widetilde{\bV}_{n}^{-1} \widetilde{\bV}_{n+1:n+m}\widetilde{\bV}_{n}^{-1}
        \btheta - (1-c)^{-3}\bxi^\top\btheta  \right|\\
&\le&
\left|\bxi^\top \widetilde{\bV}_{n}^{-1} \widetilde{\bV}_{n+1:n+m} \widetilde{\bV}_{n}^{-1}
        \btheta - \bxi^\top \widetilde{\bV}_{n}^{-2} \btheta  \right|+
\left|\bxi^\top \widetilde{\bV}_{n}^{-2} \btheta -(1-c)^{-3}\bxi^\top\btheta  \right|, 
\end{eqnarray*}
where 
\[\left|\bxi^\top \widetilde{\bV}_{n}^{-2} \btheta -(1-c)^{-3}\bxi^\top\btheta  \right|\stackrel{a.s.}{\rightarrow} 0\]
for $p/n \rightarrow c \in (0, 1)$ as $n\rightarrow \infty$ by applying Lemma 1.3 in \cite{bodnarokhrinparolya2020}.

Furthermore, using the equality
\[
\bxi^\top \widetilde{\bV}_{n}^{-1} \widetilde{\bV}_{n+1:n+m} \widetilde{\bV}_{n}^{-1}
        \btheta=\frac{1}{m} \sum_{j=n+1}^{n+m} \bx_j^\top \widetilde{\bV}_{n}^{-1}\btheta \bxi^\top \widetilde{\bV}_{n}^{-1} \bx_j
\]
and the fact that $\widetilde{\bV}_{n}^{-1}\btheta \bxi^\top \widetilde{\bV}_{n}^{-1}$ possesses the bounded trace norm which is asymptotically bounded by $\sqrt{\btheta^\top \widetilde{\bV}_{n}^{-2}\btheta} \sqrt{ \bxi^\top \widetilde{\bV}_{n}^{-2}\bxi}$,
the application of Lemma 4 in \cite{rubio2011spectral} leads to
\[\left|\bxi^\top \widetilde{\bV}_{n}^{-1} \widetilde{\bV}_{n+1:n+m}\widetilde{\bV}_{n}^{-1}
\btheta - \bxi^\top \widetilde{\bV}_{n}^{-2}\btheta\right|\stackrel{a.s.}{\rightarrow} 0,\]
for $p/m \rightarrow \tilde{c} \in (0, \infty)$ as $m\rightarrow \infty$ for any large enough $n$. The second statement \eqref{eqn:lem1eq2} can similarly be proved. This completes the proof of the lemma.
\end{proof}

\begin{lemma}\label{lem:lem3}
Let $\bxi$ and $\btheta$ be two nonrandom vectors with bounded Euclidean norms. 
Assume that $m,n>1$. Then it holds that
    \begin{align}
        \left|\bxi^\top \bV_{n}^{-1} \bV_{n+1:n+m} \bV_{n}^{-1}\btheta - (1-c)^{-3}\bxi^\top\btheta  \right| &\stackrel{a.s.}{\rightarrow} 0, \label{eqn:lem3eq1}
    \end{align}
and
\begin{align}
        \left|\bxi^\top \bV_{n}^{-1} \bV_{n+1:n+m} \btheta - (1-c)^{-1}\bxi^\top\btheta  \right| &\stackrel{a.s.}{\rightarrow} 0, \label{eqn:lem3eq2}
    \end{align}
for $p/n \rightarrow c \in (0, 1)$ and $p/m \rightarrow \tilde{c} \in (0, \infty)$
    as $n,m\rightarrow \infty$.
\end{lemma}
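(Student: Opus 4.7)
The plan is to mirror the proof of Lemma \ref{lem:lem1} while carefully absorbing the additional rank-one corrections that distinguish the centered sample covariances $\bV_n$, $\bV_{n+1:n+m}$ from their uncentered counterparts $\widetilde{\bV}_n$, $\widetilde{\bV}_{n+1:n+m}$. The starting algebraic identity is
\[
\bV_{n+1:n+m} = \frac{m}{m-1}\,\widetilde{\bV}_{n+1:n+m} - \frac{m}{m-1}\,\bar{\bx}_{n+1:n+m}\bar{\bx}_{n+1:n+m}^\top,
\]
which splits the bilinear form in \eqref{eqn:lem3eq1} into a main piece $\tfrac{m}{m-1}\,\bxi^\top \bV_n^{-1}\widetilde{\bV}_{n+1:n+m}\bV_n^{-1}\btheta$ and a rank-one correction $-\tfrac{m}{m-1}(\bxi^\top\bV_n^{-1}\bar{\bx}_{n+1:n+m})(\bar{\bx}_{n+1:n+m}^\top\bV_n^{-1}\btheta)$; the analogous decomposition applies to \eqref{eqn:lem3eq2}.

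For the main piece I would condition on $\bV_n$ (which depends only on the first $n$ observations and is therefore independent of $\bx_{n+1},\dots,\bx_{n+m}$) and rewrite it as
\[
\bxi^\top \bV_n^{-1}\widetilde{\bV}_{n+1:n+m}\bV_n^{-1}\btheta = \frac{1}{m}\sum_{j=n+1}^{n+m}\bxi^\top\bV_n^{-1}\bx_j\,\bx_j^\top\bV_n^{-1}\btheta.
\]
Exactly as in the proof of Lemma \ref{lem:lem1}, the rank-one matrix $\bV_n^{-1}\btheta\bxi^\top\bV_n^{-1}$ has trace norm asymptotically bounded by $\sqrt{\bxi^\top\bV_n^{-2}\bxi}\sqrt{\btheta^\top\bV_n^{-2}\btheta}$, so Lemma 4 of \cite{rubio2011spectral} delivers $|\bxi^\top \bV_n^{-1}\widetilde{\bV}_{n+1:n+m}\bV_n^{-1}\btheta - \bxi^\top\bV_n^{-2}\btheta| \stackrel{a.s.}{\rightarrow} 0$. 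Since Lemma 1.3 of \cite{bodnarokhrinparolya2020} is formulated for the centered sample covariance and was already invoked in that form in the proof of Theorem \ref{th:th1}, it yields $\bxi^\top\bV_n^{-2}\btheta \stackrel{a.s.}{\rightarrow}(1-c)^{-3}\bxi^\top\btheta$; for \eqref{eqn:lem3eq2} the same two steps reduce the main piece to $\bxi^\top\bV_n^{-1}\btheta \stackrel{a.s.}{\rightarrow} (1-c)^{-1}\bxi^\top\btheta$.

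It remains to show that the correction piece is asymptotically negligible, and this is where the proof genuinely departs from that of Lemma \ref{lem:lem1}. The key observation is that $\bar{\bx}_{n+1:n+m}$ is independent of $\bV_n$, has mean zero, and has coordinate-wise variance $1/m$. Conditional on $\bV_n$, the scalar $\bxi^\top\bV_n^{-1}\bar{\bx}_{n+1:n+m}$ has zero mean and variance $\bxi^\top\bV_n^{-2}\bxi/m$, which is a.s. bounded since $\bxi^\top\bV_n^{-2}\bxi\to(1-c)^{-3}\|\bxi\|^2$; the identical argument applies to $\bar{\bx}_{n+1:n+m}^\top\bV_n^{-1}\btheta$ and, in the setting of \eqref{eqn:lem3eq2}, to $\bar{\bx}_{n+1:n+m}^\top\btheta$. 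Each of these factors therefore tends to $0$ in $L^2$, and the $4+\varepsilon$ moment assumption combined with a standard Borel--Cantelli argument upgrades the convergence to almost sure. Multiplied against the bounded prefactor $m/(m-1)\to 1$, the correction vanishes a.s., completing both statements. The main technical subtlety is precisely this $L^2$-to-almost-sure upgrade for the correction terms, which hinges on the higher-moment hypothesis in the same way as the analogous step in the proof of Lemma \ref{lem:lem1}.
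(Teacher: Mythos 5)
Your proof is correct, and it reaches the conclusion by a route that is recognizably the same in spirit but genuinely leaner than the paper's. The paper pushes \emph{both} centered matrices back to their uncentered versions: it writes $\bV_{n}^{-1}$ via the Sherman--Morrison formula as $\widetilde{\bV}_{n}^{-1}$ plus rank-one corrections in $\bar{\bx}_{n}$, reduces the main term to Lemma \ref{lem:lem1}, and then needs Lemma 5.2 of \cite{bodnarokhrinparolya2020} to kill the resulting $\bar{\bx}_{n}$ cross-terms. You instead expand only the out-of-sample matrix, $\bV_{n+1:n+m}=\tfrac{m}{m-1}\widetilde{\bV}_{n+1:n+m}-\tfrac{m}{m-1}\bar{\bx}_{n+1:n+m}\bar{\bx}_{n+1:n+m}^\top$, and apply the Rubio--Mestre trace lemma directly with the conditioning matrix $\bV_{n}^{-1}\btheta\bxi^\top\bV_{n}^{-1}$; this is legitimate because that matrix is independent of $\bx_{n+1},\ldots,\bx_{n+m}$ and its trace norm is a.s.\ bounded, and because Lemma 1.3 of \cite{bodnarokhrinparolya2020} is already stated for the \emph{centered} sample covariance (the paper itself invokes it in that form in the proof of Theorem \ref{th:th1}), so $\bxi^\top\bV_{n}^{-2}\btheta\to(1-c)^{-3}\bxi^\top\btheta$ needs no further decomposition. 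What your route buys is the elimination of the in-sample Sherman--Morrison bookkeeping and of the appeal to Lemma 5.2; what it costs is that the negligibility of the $\bar{\bx}_{n+1:n+m}$ correction must be argued directly, and here your sketch is slightly underspecified: Chebyshev at the second moment gives tail bounds of order $1/m$, which are not summable, so the Borel--Cantelli step really does require the fourth-moment bound $\mathbb{E}\bigl[(\bxi^\top\bV_{n}^{-1}\bar{\bx}_{n+1:n+m})^4\mid\bV_{n}\bigr]=O(m^{-2})$ that the $4+\epsilon$ moment assumption supplies -- you correctly flag this as the crux, and it is no less rigorous than the paper's own treatment, which deduces almost sure convergence of $\bar{\bx}_{n+1:n+m}^\top\bV_{n}^{-1}\btheta$ from a conditional central limit theorem (an argument that, taken literally, only yields convergence in probability). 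Both proofs share the residual subtlety of making the almost sure statement uniform in the joint limit $n,m\to\infty$, which neither fully spells out.
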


\begin{proof}[Proof of Lemma \ref{lem:lem3}]
The application of the Sherman–Morrison formula leads to
\begin{eqnarray*}
&&\bxi^\top \bV_{n}^{-1} \bV_{n+1:n+m} \bV_{n}^{-1}\btheta = \frac{(n-1)^2m}{n^2(m-1)}
\bxi^\top \left(\widetilde{\bV}_{n}-\bar{\bx}_n\bar{\bx}_n^\top\right)^{-1}
\widetilde{\bV}_{n+1:n+m}
\left(\widetilde{\bV}_{n}-\bar{\bx}_n\bar{\bx}_n^\top\right)^{-1}\btheta\\
&-&  \frac{(n-1)^2m}{n^2(m-1)}
\bxi^\top \bV_{n}^{-1}\bar{\bx}_{n+1:n+m}\bar{\bx}_{n+1:n+m}^\top \bV_{n}^{-1} \btheta
\\
&=& \frac{(n-1)^2m}{n^2(m-1)}\Bigg(\bxi^\top \widetilde{\bV}_{n}^{-1} \widetilde{\bV}_{n+1:n+m}\widetilde{\bV}_{n}^{-1}\btheta
+2\frac{
\bxi^\top \widetilde{\bV}_{n}^{-1}\bar{\bx}_{n}\bar{\bx}_{n}^\top \widetilde{\bV}_{n}^{-1}\widetilde{\bV}_{n+1:n+m}\widetilde{\bV}_{n}^{-1}\btheta}{1-\bar{\bx}_{n}^\top\widetilde{\bV}_{n}^{-1}\bar{\bx}_{n}}\\
&+&
\frac{
\bxi^\top \widetilde{\bV}_{n}^{-1}\bar{\bx}_{n}\bar{\bx}_{n}^\top \widetilde{\bV}_{n}^{-1} \widetilde{\bV}_{n+1:n+m}\widetilde{\bV}_{n}^{-1}\bar{\bx}_{n}\bar{\bx}_{n}^\top\widetilde{\bV}_{n}^{-1}\btheta}{(1-\bar{\bx}_{n}^\top\widetilde{\bV}_{n}^{-1}\bar{\bx}_{n})^2}-
\bxi^\top \bV_{n}^{-1}\bar{\bx}_{n+1:n+m}\bar{\bx}_{n+1:n+m}^\top \bV_{n}^{-1} \btheta\Bigg)
\end{eqnarray*}
By definition $\sqrt{m}\bar{\bx}_{n+1:n+m}$ consists of elements with are independent and identically distributed with zero mean and variance equal one. Then, conditionally on $\bX_{1:n}$ it holds that (see, Theorem in \cite{dette2020likelihood})
\[\sqrt{m}\bar{\bx}_{n+1:n+m}\bV_{n}^{-1} \btheta|\bX_{1:n}\stackrel{d.}{\rightarrow}
\mathcal{N}\left(0,\btheta^\top \bV_{n}^{-2} \btheta\right)
\quad \textbf{as} \quad p \rightarrow \infty
\]
and, consequently, $\bar{\bx}_{n+1:n+m}^\top \bV_{n}^{-1} \btheta \stackrel{a.s.}{\rightarrow}0$ for $p/m \rightarrow\tilde{c}$ as $m \rightarrow \infty$. Finally, the applications of Lemma 5.2 in \citet{bodnarokhrinparolya2020} and Lemma \ref{lem:lem1} completes the proof of the lemma. Similarly, the result \eqref{eqn:lem3eq2} is deduced. 
\end{proof}

\begin{proof}[Proof of Theorem \ref{th:th3}]
\begin{enumerate}[(i)]
    \item We get with \eqref{SV-app} that
    \begin{eqnarray*}
    \hat{V}_{\hbw_{n;S};m}&=&\hbw_{n;S}^\top\bS_{n+1:n+m} \hbw_{n;S}
    =\frac{\bOne^\top\bS_n^{-1} \bS_{n+1:n+m}\bS_n^{-1}\bOne}{(\bOne^\top\bS_n^{-1}\bOne)^2}\\
    &=&\frac{\bOne^\top\bSigma^{-1/2}\bV_n^{-1} \bV_{n+1:n+m}\bV_n^{-1}\bSigma^{-1/2}\bOne}{(\bOne^\top\bSigma^{-1/2}\bV_n^{-1}\bSigma^{-1/2}\bOne)^2}
    \stackrel{a.s.}{\rightarrow} \frac{(1-c)^{-3}\bOne^\top\bSigma^{-1}\bOne}{(1-c)^{-2}(\bOne^\top\bSigma^{-1}\bOne)^2}=(1-c)^{-1}V_{GMV}
    \end{eqnarray*}
for $p/n \to c \in (0,1)$ and $p/m \to \tilde{c} \in (0,\infty)$  as $n,m \to \infty$ by using Lemma \ref{lem:lem3} and Lemma 1.3 in \cite{bodnarokhrinparolya2020}.
    \item The result of part (ii) follows from the proof of Theorem 3.2 of \cite{bodnar2014strong}.
\item We get
\begin{eqnarray*}
\hat{V}_{\hbw_{n;BPS}}&=&\left(\hat{\alpha}_{n;BPS} \hat{\mathbf{w}}_{n;S} + (1- \hat{\alpha}_{n;BPS})\bb\right)^\top \bS_{n+1:n+m} \left(\hat{\alpha}_{n;BPS} \hat{\mathbf{w}}_{n;S} + (1- \hat{\alpha}_{n;BPS})\bb\right)\\
&=&\hat{\alpha}_{n;BPS}^2\hat{\mathbf{w}}_{n;S}^\top\bS_{n+1:n+m} \hat{\mathbf{w}}_{n;S}+(1-\hat{\alpha}_{n;BPS} )^2\bb^\top\bS_{n+1:n+m}\bb\\
&+&2\hat{\alpha}_{n;BPS} (1-\hat{\alpha}_{n;BPS}) \hat{\mathbf{w}}_{n;S}^\top\bS_{n+1:n+m}\bb\\
&=&\hat{\alpha}_{n;BPS}^2\hat{\mathbf{w}}_{n;S}^\top\bS_{n+1:n+m} \hat{\mathbf{w}}_{n;S}+(1-\hat{\alpha}_{n;BPS} )^2\bb^\top\bS_{n+1:n+m}\bb\\
&+&2\hat{\alpha}_{n;BPS} (1-\hat{\alpha}_{n;BPS})
\frac{\bOne^\top\bSigma^{-1/2}\bV_n^{-1} \bV_{n+1:n+m}\bSigma^{1/2}\bb}{\bOne^\top\bS_n^{-1}\bOne}\\
&\stackrel{a.s.}{\rightarrow}&
 \alpha_{BPS}^2\frac{1}{1-c}V_{GMV}+(1-\alpha_{BPS})^2V_{\bb}+2\alpha_{BPS}(1-\alpha_{BPS})V_{GMV}\\
&=&
V_{GMV}+    \alpha_{BPS}^2\frac{c}{1-c}V_{GMV}+(1-\alpha_{BPS})^2(V_{\bb}-V_{GMV})
\end{eqnarray*}
for $p/n \to c \in (0,1)$ and $p/m \to \tilde{c} \in (0,1)$ as $n,m \to \infty$ by applying Lemma \ref{lem:lem3}, Lemma 1.3 of \cite{bodnarokhrinparolya2020}, and the results from parts (i) and (ii).  

\item The result of part (iv) follows from the proofs of parts (i) and (ii).
\end{enumerate}
\end{proof}

\begin{proof}[Proof of Theorem \ref{th:th4}]
The results of the theorem follows from Lemma 1.3 of \cite{bodnarokhrinparolya2020} by noting that 
\[\left|\frac{\bOne^\top\bS_{n+1:m+1}^{-1}\bOne}{\bOne^\top\bSigma^{-1}\bOne}-(1-\tilde{c})^{-1}\right|\stackrel{a.s.}{\rightarrow}0\]
$p/m \to \tilde{c} \in (0,1)$ as $m \to \infty$.
\end{proof}
\bibliography{references}

\end{document}